\def\eqref#1{equation~\ref{#1}}
\def\1{\bm{1}}
\DeclareMathAlphabet{\mathsfit}{\encodingdefault}{\sfdefault}{m}{sl}
\SetMathAlphabet{\mathsfit}{bold}{\encodingdefault}{\sfdefault}{bx}{n}
\newcommand{\E}{\mathbb{E}}
\newcommand{\R}{\mathbb{R}}
\DeclareMathOperator*{\argmin}{arg\,min}
\newcommand{\ind}{\mathds{1}}
\renewcommand{\P}{\mathbb{P}}
\newcommand{\Err}{\mathrm{Err}}
\newcommand{\leb}{\mathrm{leb}\,}
\newcommand{\hull}{\mathrm{hull}}
\newcommand{\dif}{\mathrm{d}}
\newcommand{\algo}{\mathcal{A}}
\newcommand{\rescale}{\mathrm{rescale}}
\newcommand{\supp}{\mathrm{supp}}
\newcommand{\filt}{\mathcal{F}}
\newcommand{\ppi}{\mathrm{ppi}}
\theoremstyle{plain}
\newtheorem{theorem}{Theorem}[section]
\newtheorem{proposition}[theorem]{Proposition}
\newtheorem{lemma}[theorem]{Lemma}
\theoremstyle{definition}
\newtheorem{assumption}[theorem]{Assumption}
\theoremstyle{remark}
\newtheorem{remark}[theorem]{Remark}
\newenvironment{proofsketch}{%
    \proof}{\endproof}
\title{Extending Prediction-Powered Inference \\ through Conformal Prediction}
\author{Daniel Csillag\\
FGV EMAp\\
\texttt{daniel.csillag@fgv.br} \\
\And
Pedro Dall'Antonia \\
FGV EMAp\\
\texttt{pedro.dallantonia@fgv.br}\\
\AND
Claudio José Struchiner\\
FGV EMAp\\
\texttt{claudio.struchiner@fgv.br} \\
\And
Guilherme Tegoni Goedert\\
FGV EMAp\\
\texttt{guilherme.goedert@fgv.br}
}
\begin{document}

\maketitle

\begin{abstract}
  Prediction-powered inference is a recent methodology for the safe use of black-box ML models to impute missing data, strengthening inference of statistical parameters.
  However, many applications require strong properties besides valid inference, such as privacy, robustness or validity under continuous distribution shifts;
  deriving prediction-powered methods with such guarantees is generally an arduous process, and has to be done case by case.
  In this paper, we resolve this issue by connecting prediction-powered inference with conformal prediction:
  by performing imputation through a calibrated conformal set-predictor, we attain validity while achieving additional guarantees in a natural manner.
  We instantiate our procedure for the inference of means, Z- and M-estimation, as well as e-values and e-value-based procedures.
  Furthermore, in the case of e-values, ours is the first general prediction-powered procedure that operates off-line.
  We demonstrate these advantages by applying our method on private and time-series data. Both tasks are nontrivial within the standard prediction-powered framework but become natural under our method.
\end{abstract}

\section{Introduction}

Quality statistical inference requires a considerable amount of samples, which can be difficult to obtain or may be missing. Prediction-powered inference~\citep{ppi} is a recent and promising approach that addresses this challenge by using a black-box ML model to predict the missing samples from the auxiliary data, while simultaneously correcting for the bias induced by this imputation.
However, many practical applications require the resulting inferences to satisfy strong guarantees beyond validity, such as privacy (for sensitive data), robustness (to protect against outliers or distribution shifts) or validity under continuously changing scenarios.
Deriving prediction-powered methods that satisfy requirements of this sort remains challenging, with existing work relying on case-by-case constructions.

In this paper, we resolve this
by connecting prediction-powered inference with conformal prediction.
In particular, we show that a calibrated set-predictor can be used for prediction-powered inference in a general manner, while inheriting additional properties from a conformal calibration procedure;
this allows us to
directly leverage the vast literature on conformal prediction with additional guarantees, spanning privacy \citep{conformal-private,conformal-private-2}, robustness to strategic and adversarial distribution shift \citep{strategic-cp,zargarbashi-2,conformal-lipschitz}, continuous distribution shift \citep{aci,agaci,cp-online-sgd,cp-online-duchi}, robustness to outliers \citep{conformal-robust,conformal-robust-impute,conformal-robust-3}, censored/missing data \citep{conformal-missing,conformal-censored} and many more.
In this way, we offer a single, general solution that overcomes the fragmented, case-specific nature of previous works.

We develop our approach
for the inference of means, Z- and M-estimation problems, as well as e-values and e-value-based procedures.
This is the first general method for prediction-powered inference with additional guarantees, as well as the first instance of conformal prediction being used for nonparametric statistical inference.
When existing prediction-powered methods are applicable,
their performance is close to ours.
We illustrate our approach in
two settings beyond the scope of previous methods, highlighting its advantages.

\paragraph{Our contributions}
\begin{itemize}
  \item We propose a general framework for deriving prediction-powered methods with stronger guarantees such as privacy, robustness and validity under continuous distribution shift. Our method works by performing imputation through a calibrated conformal set-predictor; these guarantees are then directly achieved by choosing an appropriate conformal calibration method, for which a substantial body of work exists. Our framework's ability to inherit properties from conformal prediction methods renders it immediately applicable in many diverse settings, where previous prediction-powered methods fall short.
  \item
    We instantiate our framework for (i) inference of means; (ii) general Z- and M-estimation problems; and (iii) general e-values and e-value-based procedures -- thus matching the breadth of existing prediction-powered methods.
    In each case, we prove that our procedure is valid under minimal assumptions and quantify their statistical power, which we find to be directly linked to the average size of the conformal predictive sets and their miscoverage rate. Furthermore, in the setting of e-values, our procedure is the first general prediction-powered inference procedure valid without active data collection.
  \item Beyond comparisons with existing prediction-powered methods, we apply our approach to two practical settings out of reach of prior work: (i) private healthcare for thyroid cancer, and (ii) continuous risk monitoring of a deployed model. In each setting we obtain procedures that can be readily applied by practitioners. In both accounts, ours is the first applicable prediction-powered procedure, thus setting an important baseline for future work.
\end{itemize}

\subsection{Related work}

\paragraph{Prediction-powered inference}
In many applications, researchers have access to large datasets but only small amounts of expensive ground truth `labels.'
Though machine learning models can often accurately predict labels for the whole dataset, they are not perfect;
in particular, statistical inference atop such predictions can suffer from significant bias.
Prediction-powered inference seeks to resolve this, by appropriately debiasing such inferences.
The topic already spans a significant body of work both methodological (e.g., \citep{ppi,ppipp,stratified-ppi,ppi-local,predictions-as-surrogates,evalue-ppi,fab}) and applied (e.g., \citep{boyeau2024autoevalrightusingsynthetic, AIKEN2025103477}). Existing methods typically prove valid inference (i.e., lack of bias), with some works also establishing guarantees under covariate or label shift.
Towards additional guarantees (e.g. privacy, robustness, etc.), the works of \citep{ppi-performative,ppi-federated,ppi-sharedstate} establish guarantees under performativity, federation and interference, respectively, but require ad-hoc analyses to do so.

\paragraph{Conformal prediction}
On the other side of the literature, conformal prediction \citep{vovk-cp} has emerged as a solid manner of quantifying uncertainty about predictions. In its most common formulation, conformal prediction produces for each sample a predictive set that will contain the true label with probability at least $1 - \alpha$, for a significance level $\alpha \in (0, 1)$ chosen a priori. Conformal prediction has also spanned a vast amount of literature on methodology (e.g., \citep{covariate-shift-cp,conformal-risk-control,cp-conditional,strategic-cp,lars-selfcalibrating-cp}), theory (e.g., \citep{cp-roth-riskadverse,cp-trainingconditional}) and applications (e.g., \citep{s22114208,amnioml,olim}); of particular relevance is the wide literature on conformal prediction with additional guarantees, e.g. \citep{conformal-private,conformal-private-2,strategic-cp,zargarbashi-2,conformal-lipschitz,aci,agaci,cp-online-sgd,cp-online-duchi,conformal-robust,conformal-robust-impute,conformal-robust-3,conformal-missing,conformal-censored}.

\paragraph{Connecting the two}
Though the two tackle similar problems, connecting them is not immediate: conformal prediction guarantees that the probability of a single predictive set containing its corresponding label is high, but statistical inference requires multiple data points, not a single one.
A back-of-the-envelope calculation would give us that, if conformal prediction ensures that a single prediction set will contain its corresponding true value with probability $1 - \alpha$, the probability that $n$ independent prediction sets will contain their corresponding true values will be of about $(1 - \alpha)^n$, which quickly becomes problematic as $n$ grows.
Indeed, various works have tried to alleviate this issue for ``batch'' conformal prediction \citep{powerful-batch-cp,cp-stat-assumptionful,conformal-pvalues-selection,conformal-fdr-link}, sometimes even with the explicit goal of statistical inference \citep{cp-stat-assumptionful}. However, they all reach an overarching conclusion that one would need to adjust the conformal predictor in a manner that still scales badly as $n$ grows. Our work, in contrast, requires no such adjustment.

\paragraph{Conformal prediction for statistical inference}
Conformal prediction has spanned much work, but relatively little in regards to its connections to more usual statistical inference. Of particular note is \citep{cp-stat-assumptionful}, which leverages conformal prediction for inference of the parameter $\theta$ of a statistical model of the form $Y = f_\theta(X) + \xi$ through a voting mechanism. However, besides being limited to this specific statistical model and task, it requires harsh assumptions on the nature of the noise $\xi$ that make it sensitive to misspecification. Also worth highlighting is the work of \citep{izbicki-conformal-stat}, which uses ideas from conformal prediction to solve statistical inference problems, but is not applicable to prediction-powered inference.

\section{Method}

We first present our method in the simple context of mean estimation.
Then, building up on the idea of conformal prediction-powered mean estimation we extend to progressively more complex settings, first considering Z- and M-estimation tasks (e.g., means, quantiles and regression coefficients), and then general e-value-powered procedures.

Throughout this section, we consider that we have i.i.d. data $(X_i, Y_i)_{i=1}^{n} \sim P$ from some unknown distribution $P$, where we have access to the $X_i$ but the $Y_i$ are missing.
Leveraging the i.i.d. assumption, we will additionally make use of $(X, Y) \sim P$ when the indices are irrelevant, and denote the support of these variables by $\mathcal{X} = \supp(X)$ and $\mathcal{Y} = \supp(Y)$.
In particular sections some additional assumptions are necessary, and will be made accordingly.

Let $C : \mathcal{X} \to 2^\mathcal{Y}$ be a set-predictor,
fit on some hold-out data;
we define its miscoverage rate $\Err(C) := \P[Y \not\in C(X)]$.
It is known that when $C$ is fit via, e.g., split conformal prediction with target miscoverage $\gamma \in (0, 1)$, we will have $\Err(C) \approx \gamma$ \citep{gentle-intro-cp,cp-trainingconditional}.
For full generality, we consider the conformal predictor fixed and state our results in terms of just $\Err(C)$.

For the sake of clarity, we keep our presentation in the main paper purely to scalar estimation problems. Multivariate estimation follows analogously; see Appendix~\ref{suppl:multivariate}.

\subsection{Warmup: mean estimation}\label{sec:mean-estimation}

Our goal here is to infer $\E[\phi(Y)]$ for some function $\phi : \mathcal{Y} \to \R$; for this we will need to assume that $\phi(Y)$ is bounded almost surely within some interval $[a, b]$. For convenience, let $\phi(C(X)) := \{\phi(y) : y \in C(X)\}$ and $M = b - a$.
It then follows:

\begin{lemma}\label{thm:main-argument}
    Let $C : \mathcal{X} \to 2^{\mathcal{Y}}$ be a set predictor and suppose that $\phi(Y) \in [a, b]$ almost surely. Then
    \[ \E[\inf \phi(C(X))] - M\,\Err(C) \leq \E[\phi(Y)] \leq \E[\sup \phi(C(X))] + M\,\Err(C). \]
\end{lemma}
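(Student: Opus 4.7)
The plan is to split $\E[\phi(Y)]$ according to whether the true label lies in the predicted set, and then bound each piece separately. Concretely, for any event $A$, I write
\[ \E[\phi(Y)] = \E[\phi(Y)\,\ind\{Y \in C(X)\}] + \E[\phi(Y)\,\ind\{Y \notin C(X)\}]. \]
On the event $\{Y \in C(X)\}$, $\phi(Y)$ belongs to the set $\phi(C(X))$, so it is sandwiched between $\inf \phi(C(X))$ and $\sup \phi(C(X))$. On the complementary event, the predictor gives us no information, so I will just use the almost-sure bound $\phi(Y) \in [a,b]$.

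For the upper bound, combining these two observations yields
\[ \E[\phi(Y)] \leq \E[\sup\phi(C(X))\,\ind\{Y \in C(X)\}] + b\,\Err(C). \]
The remaining step is to turn the first term into the unconditional expectation $\E[\sup \phi(C(X))]$. Using $\sup \phi(C(X)) \geq a$ (which holds because $\phi$ takes values in $[a,b]$ on $\mathcal{Y}$ and $C(X)$ is a subset of $\mathcal{Y}$), I can add and subtract:
\[ \E[\sup\phi(C(X))\,\ind\{Y \in C(X)\}] = \E[\sup\phi(C(X))] - \E[\sup\phi(C(X))\,\ind\{Y \notin C(X)\}] \leq \E[\sup\phi(C(X))] - a\,\Err(C). \]
Substituting this back gives $\E[\phi(Y)] \leq \E[\sup \phi(C(X))] + (b - a)\,\Err(C) = \E[\sup\phi(C(X))] + M\,\Err(C)$, which is the desired upper bound. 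The lower bound is entirely symmetric: on $\{Y \in C(X)\}$, replace $\sup$ with $\inf$; on the miscoverage event, replace $b$ with $a$; and use $\inf \phi(C(X)) \leq b$ to absorb the slack into $M\,\Err(C)$.

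I do not foresee a real obstacle here; the argument is essentially a one-line decomposition plus the boundedness hypothesis. The only subtlety worth flagging is that $\sup \phi(C(X))$ and $\inf \phi(C(X))$ must themselves lie in $[a,b]$ almost surely for the absorption step to work; this is justified by interpreting $\phi$ as $[a,b]$-valued on $\mathcal{Y}$ (a harmless strengthening of the stated hypothesis, since $C(X) \subseteq \mathcal{Y}$) and assuming $C(X)$ is nonempty (which can be enforced without loss by adding any single element of $\mathcal{Y}$ whenever $C$ would otherwise output $\emptyset$, without affecting $\Err(C)$).
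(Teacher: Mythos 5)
Your proof is correct and follows essentially the same route as the paper's: both condition on the coverage event $\{Y \in C(X)\}$, use set membership to compare $\phi(Y)$ with $\sup\phi(C(X))$ (resp.\ $\inf$) on that event, and use boundedness to absorb the miscoverage contribution into an $M\,\Err(C)$ term. The only organizational difference is that you bound $\phi(Y) \leq b$ and $\sup\phi(C(X)) \geq a$ separately on the miscoverage event and then combine, whereas the paper bounds the single difference $\phi(Y) - \sup\phi(C(X)) \leq b - a = M$ directly; the two are equivalent. Your explicit flagging of the implicit hypotheses (that $\phi$ is $[a,b]$-valued on all of $\mathcal{Y}$ and that $C(X)$ is nonempty) is a fair observation — the paper relies on the same facts without stating them.
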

\begin{proofsketch}
    We will show that $\E[\inf \phi(C(X))] - M\,\Err(C) \leq \E[\phi(Y)]$ by showing that $\E[\inf \phi(C(X)) - \phi(Y)] \leq M\,\Err(C)$. The proof of the upper bound is analogous, and can be found in the appendix.

    The key idea is to use the law of total expectation to condition on whether $Y$ belongs in the predictive set $C(X)$:
    \begin{align*}
        \E[\inf \phi(C(X)) - \phi(Y)]
        &= \E[\inf \phi(C(X)) - \phi(Y) | Y \in C(X)] \, \P[Y \in C(X)]
        \\ &\quad + \E[\inf \phi(C(X)) - \phi(Y) | Y \not\in C(X)] \, \P[Y \not\in C(X)];
    \end{align*}
    Now, given that $Y \in C(X)$, it must hold that $\phi(Y) \in \phi(C(X))$, and so $\inf \phi(C(X)) \leq \phi(Y)$; thus $\E[\inf \phi(C(X)) - \phi(Y) | Y \in C(X)] \leq 0$.
    Additionally, note that because both $\phi(C(X))$ and $\phi(Y)$ are bounded in $[a, b]$, it holds that $\inf \phi(C(X)) - \phi(Y) \leq b - a = M$ almost surely, and so $\E[\inf \phi(C(X)) - \phi(Y) | Y \not\in C(X)] \leq M$. Thus
    \[ \E[\inf \phi(C(X)) - \phi(Y)] \leq 0 + M\,\Err(C) = M\,\Err(C). \qedhere \]
\end{proofsketch}

\begin{remark}
    The assumption that the image of $\phi$ be bounded seems necessary. If it is not, then for $\E[\inf \phi(C(X)) - \phi(Y) | Y \not\in C(X)]$ to be well-behaved will generally require relatively strong assumptions on the underlying predictive model and data distirbution.
\end{remark}

Note that $\Err(C)$ is controlled by the conformal calibration, and that it is independent from the size $n$ of the data set for inference, and thus this bound scales gracefully.

Lemma~\ref{thm:main-argument} establishes that the means can be safely bounded via imputations based on our conformal predictive sets. This motivates the following procedure:
\begin{enumerate}[(i)]
    \item Fit the conformal set predictor $C$ on a hold-out dataset with some conformal calibration method (e.g. split conformal prediction);
    \item Use the unlabelled data $(X_i)_{i=1}^{n}$ to compute lower and upper one-sided $(1-\alpha/2)$-confidence intervals $[\widehat{L}^{(\E\phi)}_{\alpha/2}, +\infty)$ for $\E[\inf \phi(C(X))]$ and $(-\infty, \widehat{U}^{(\E\phi)}_{\alpha/2}]$ for $\E[\sup \phi(C(X))]$; i.e., $\widehat{L}^{(\E\phi)}_{\alpha/2}, \widehat{U}^{(\E\phi)}_{\alpha/2}$ such that
    \begin{align*}
        \P_{\widehat{L}^{(\E\phi)}_{\alpha/2}}\left[ \widehat{L}^{(\E\phi)}_{\alpha/2} \leq \E[\inf \phi(C(X))] \right] &\geq 1 - \frac{\alpha}{2};
        &
        \P_{\widehat{U}^{(\E\phi)}_{\alpha/2}}\left[ \E[\sup \phi(C(X))] \leq \widehat{U}^{(\E\phi)}_{\alpha/2} \right] &\geq 1 - \frac{\alpha}{2}.
    \end{align*}
        This can be readily done with off-the-shelf confidence intervals for the mean, such as CLT-based CIs, Hoeffding CIs and e-value-based methods (e.g. \citep{evalue-mean}).
    \item Produce the interval
        \begin{equation}\label{eq:interval-mean-estimation}
            \widehat{C}^{(\E\phi)}_\alpha := \left[ \widehat{L}^{(\E\phi)}_{\alpha/2} - M\,\Err(C),\ \widehat{U}^{(\E\phi)}_{\alpha/2} + M\,\Err(C) \right].
        \end{equation}
\end{enumerate}
This is a simple procedure that benefits from good theoretical properties. In particular, the resulting interval is a valid $(1-\alpha)$-confidence interval for $\E[\phi(Y)]$:

\begin{proposition}\label{thm:mean-valid}
    Under the conditions of Lemma~\ref{thm:main-argument}, for any $\alpha \in (0, 1)$, let $\widehat{C}^{(\E\phi)}_\alpha$ be as in Equation~\ref{eq:interval-mean-estimation}. Then $\widehat{C}^{(\E\phi)}_\alpha$ is a valid $(1-\alpha)$-confidence interval for $\E[\phi(Y)]$, i.e.,
    \[ \P\left[ \E[\phi(Y)] \in \widehat{C}^{(\E\phi)}_\alpha \right] \geq 1 - \alpha. \]
\end{proposition}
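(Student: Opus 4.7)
The plan is to combine Lemma~\ref{thm:main-argument} with the two one-sided confidence intervals via a union bound. Lemma~\ref{thm:main-argument} already sandwiches the target $\E[\phi(Y)]$ between two deterministic quantities, $\E[\inf\phi(C(X))] - M\,\Err(C)$ and $\E[\sup\phi(C(X))] + M\,\Err(C)$, so it remains only to transfer the confidence statements about $\E[\inf\phi(C(X))]$ and $\E[\sup\phi(C(X))]$ onto these two bounds.

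Concretely, I would proceed in three short steps. First, define the two ``good'' events
\begin{align*}
    A_L &:= \left\{ \widehat{L}^{(\E\phi)}_{\alpha/2} \leq \E[\inf \phi(C(X))] \right\},
    &
    A_U &:= \left\{ \E[\sup \phi(C(X))] \leq \widehat{U}^{(\E\phi)}_{\alpha/2} \right\},
\end{align*}
which by construction of the one-sided confidence intervals satisfy $\P[A_L], \P[A_U] \geq 1 - \alpha/2$. Second, observe that on $A_L$, chaining with the lower bound from Lemma~\ref{thm:main-argument} gives $\widehat{L}^{(\E\phi)}_{\alpha/2} - M\,\Err(C) \leq \E[\phi(Y)]$, and similarly on $A_U$, chaining with the upper bound yields $\E[\phi(Y)] \leq \widehat{U}^{(\E\phi)}_{\alpha/2} + M\,\Err(C)$. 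So on $A_L \cap A_U$, both endpoints of $\widehat{C}^{(\E\phi)}_\alpha$ sandwich $\E[\phi(Y)]$. Third, apply a union bound: $\P[A_L \cap A_U] \geq 1 - \P[A_L^c] - \P[A_U^c] \geq 1 - \alpha$, which immediately gives the desired coverage.

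There is no real obstacle: Lemma~\ref{thm:main-argument} is invoked deterministically (its bounds do not depend on the unlabelled sample), and the only randomness enters through the two independent-of-the-lemma confidence interval constructions, so a plain union bound over $A_L$ and $A_U$ suffices. The only subtlety worth flagging is that Lemma~\ref{thm:main-argument}'s bounds involve the \emph{population} expectations $\E[\inf\phi(C(X))]$ and $\E[\sup\phi(C(X))]$, which is precisely what the two one-sided CIs target; this alignment is what makes the argument clean and avoids any need to adjust $\Err(C)$ or the confidence level beyond the $\alpha/2$ split.
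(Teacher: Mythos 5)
Your proof is correct and is essentially the same argument as the paper's: both use Lemma~\ref{thm:main-argument} to transfer the one-sided CI guarantees for $\E[\inf\phi(C(X))]$ and $\E[\sup\phi(C(X))]$ onto the target $\E[\phi(Y)]$, and then apply a union bound over the $\alpha/2$-level one-sided failure events. The paper phrases it via the complement (bounding the miscoverage probability by $\alpha$), while you work directly with the good events, but this is purely presentational.
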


It is immediate to see that if the set predictor satisfies, e.g., privacy with regards to its calibration data, then so will the confidence interval $\widehat{C}^{(\E\phi)}$. Similarly, if the conformal predictor is robust to outliers or strategic manipulations, so is the confidence interval.

We can also exactly quantify the size of $\widehat{C}^{(\E\phi)}_\alpha$ in terms of $M$, $\Err(C)$, the average predictive interval size and the tightness of the one-sided CIs for $\widehat{L}^{(\E\phi)}_{\alpha/2}$ and $\widehat{U}^{(\E\phi)}_{\alpha/2}$. Let $\leb$ be the Lebesgue measure and $\hull(A)$ the convex hull of $A$ (i.e., in $\R$ the smallest interval containing the set $A$). Then:

\begin{proposition}\label{thm:mean-estimation-power}
    It holds that
    \begin{align*}
        \leb \widehat{C}^{(\E\phi)}_\alpha
        &= \E[\leb \hull(\phi(C(X)))] + 2 M\,\Err(C)
        \\ &\quad + (\E[\inf \phi(C(X))] - \widehat{L}^{(\E\phi)}_{\alpha/2}) + (\widehat{U}^{(\E\phi)}_{\alpha/2} - \E[\sup \phi(C(X))]).
    \end{align*}
\end{proposition}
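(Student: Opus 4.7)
The plan is to prove this by direct computation, since the statement is essentially an algebraic identity combined with a structural observation about convex hulls in $\R$.

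First I would expand the Lebesgue measure of the interval defined in Equation~\ref{eq:interval-mean-estimation}:
\begin{equation*}
    \leb \widehat{C}^{(\E\phi)}_\alpha
    = \bigl( \widehat{U}^{(\E\phi)}_{\alpha/2} + M\,\Err(C) \bigr) - \bigl( \widehat{L}^{(\E\phi)}_{\alpha/2} - M\,\Err(C) \bigr)
    = \widehat{U}^{(\E\phi)}_{\alpha/2} - \widehat{L}^{(\E\phi)}_{\alpha/2} + 2M\,\Err(C).
\end{equation*}

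Next, I would introduce the key structural observation: in $\R$, the convex hull of any bounded nonempty set $A$ is the interval $[\inf A, \sup A]$, so $\leb \hull(A) = \sup A - \inf A$. Applying this to $A = \phi(C(X))$ and taking expectations (using linearity), I get $\E[\leb \hull(\phi(C(X)))] = \E[\sup \phi(C(X))] - \E[\inf \phi(C(X))]$. I would then add and subtract $\E[\sup \phi(C(X))]$ and $\E[\inf \phi(C(X))]$ in the expression for $\leb \widehat{C}^{(\E\phi)}_\alpha$ and regroup terms, which immediately yields the claimed decomposition.

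There is no real obstacle here: the proposition is quantitative bookkeeping that attributes the three sources of interval width (the inherent uncertainty of the conformal set predictor, the coverage slack $M\,\Err(C)$, and the tightness of the two one-sided CIs). The only technicality worth a brief note is that $\phi(C(X))$ is almost-surely bounded in $[a,b]$ so that $\sup$ and $\inf$ are finite and the expectations are well-defined; this follows directly from the hypothesis of Lemma~\ref{thm:main-argument}, under which we already operate.
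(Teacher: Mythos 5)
Your proposal is correct and follows essentially the same route as the paper's proof: expand the interval width as $\widehat{U}^{(\E\phi)}_{\alpha/2} - \widehat{L}^{(\E\phi)}_{\alpha/2} + 2M\,\Err(C)$, then add and subtract $\E[\sup \phi(C(X))]$ and $\E[\inf \phi(C(X))]$ and identify their difference with $\E[\leb \hull(\phi(C(X)))]$. The bookkeeping and the structural observation about one-dimensional convex hulls match the paper exactly.
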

From Proposition~\ref{thm:mean-estimation-power} it can be seen that our method works best with tight set predictors. As the set predictor approaches perfect accuracy -- as is often the case in machine learning applications -- the first two terms can be taken to approach zero.
The last two terms, which concern the tightness of the one-sided confidence intervals $\widehat{L}^{(\E\phi)}_{\alpha/2}$ and $\widehat{U}^{(\E\phi)}_{\alpha/2}$, can be given an explicit form for specific methods for producing $\widehat{L}^{(\E\phi)}_{\alpha/2}$ and $\widehat{U}^{(\E\phi)}_{\alpha/2}$, but overall generally scale in order $O(n^{-1/2})$.

\subsection{Z-estimation and M-estimation problems} \label{sec:z-m-estimation}

Going beyond means,
we now consider the problem of Z-estimation, in which our estimand $\theta^\star \in \Theta$ (for some parameter space $\Theta$) is given as the solution to the estimating equation $\E_Y[\psi(Y; \theta^\star)] = 0$, for some function $\psi$.
Z-estimation problems are common, with prominent examples being the inference of means (for $\psi(Y; \theta) = Y - \theta$), medians (for $\psi(Y; \theta) = \ind[Y \leq \theta] - 0.5$), general quantiles (for the $q$-quantile, $\psi(Y; \theta) = \ind[Y \leq \theta] - q$), regression coefficients (for $\psi((X,Y); \theta) = \theta X^2 - X Y$) and more.
Similar to how we have assumed bounded means in Section~\ref{sec:mean-estimation}, we will assume here that $\psi(Y; \theta) \in [a_\theta, b_\theta]$ almost surely for each $\theta \in \Theta$, and let $M_\theta = b_\theta - a_\theta$. Again, for convenience, let $\psi(C(X); \theta) := \{\psi(y; \theta) : y \in C(X)\}$.

Consider the following procedure, which is close in spirit to the vanilla PPI procedure proposed by \citep{ppi}:
for each $\theta \in \Theta$, produce a lower one-sided $(1-\alpha/2)$-confidence interval $[\widehat{L}^{(\mathrm{Z}\psi)}_{\theta,\alpha/2}, +\infty)$ for $\E[\inf \psi(C(X); \theta)]$, and an upper one-sided $(1-\alpha/2)$-confidence interval $(-\infty, \widehat{U}^{(\mathrm{Z}\psi)}_{\theta,\alpha/2}]$ for $\E[\sup \psi(C(X); \theta)]$. Then, to estimate $\theta^\star$, produce the following set:
\begin{equation}\label{eq:confidence-interval-z-estimation}
    \widehat{C}^{(\mathrm{Z}\psi)}_\alpha := \left\{ \theta \in \Theta : \widehat{L}^{(\mathrm{Z}\psi)}_{\theta,\alpha/2} - M_\theta\,\Err(C) \leq 0 \leq \widehat{U}^{(\mathrm{Z}\psi)}_{\theta,\alpha/2} + M_\theta\,\Err(C) \right\}.
\end{equation}
By Lemma~\ref{thm:main-argument}, it follows that this is a valid confidence interval for $\theta^\star$:
\begin{proposition}\label{thm:z-est-valid}
    For any $\alpha \in (0, 1)$ let $\widehat{C}^{(\mathrm{Z}\psi)}_\alpha$ be as in Equation~\ref{eq:confidence-interval-z-estimation}. Then $\widehat{C}^{(\mathrm{Z}\psi)}_\alpha$ is a valid $(1-\alpha)$-confidence interval for $\theta^\star$, i.e.,
    \[ \P\left[\theta^\star \in \widehat{C}^{(\mathrm{Z}\psi)}_\alpha\right] \geq 1 - \alpha. \]
\end{proposition}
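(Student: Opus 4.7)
The plan is to chain together Lemma~\ref{thm:main-argument} with the definition of $\theta^\star$ and the coverage properties of the one-sided confidence intervals, and then apply a union bound. Concretely, I would only need to check the membership condition \emph{at the specific point} $\theta = \theta^\star$: the set $\widehat{C}^{(\mathrm{Z}\psi)}_\alpha$ is defined pointwise in $\theta$, so to establish $\P[\theta^\star \in \widehat{C}^{(\mathrm{Z}\psi)}_\alpha] \geq 1-\alpha$ it suffices to control the two inequalities defining membership at $\theta^\star$, without having to take any further union over $\Theta$.

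The first step is to apply Lemma~\ref{thm:main-argument} with the choice $\phi(\cdot) := \psi(\cdot; \theta^\star)$, which is bounded in $[a_{\theta^\star}, b_{\theta^\star}]$ by assumption. This yields the \emph{deterministic} sandwich
\[
\E[\inf \psi(C(X); \theta^\star)] - M_{\theta^\star}\,\Err(C) \leq \E[\psi(Y;\theta^\star)] \leq \E[\sup \psi(C(X); \theta^\star)] + M_{\theta^\star}\,\Err(C).
\]
The second step is to invoke the defining property of $\theta^\star$, namely $\E[\psi(Y;\theta^\star)] = 0$, so that the middle quantity above is exactly zero.

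The third step injects the randomness. By definition of the one-sided $(1-\alpha/2)$-confidence intervals at the specific point $\theta^\star$, the events
\[
E_L := \{ \widehat{L}^{(\mathrm{Z}\psi)}_{\theta^\star,\alpha/2} \leq \E[\inf \psi(C(X); \theta^\star)] \}, \qquad
E_U := \{ \E[\sup \psi(C(X); \theta^\star)] \leq \widehat{U}^{(\mathrm{Z}\psi)}_{\theta^\star,\alpha/2} \}
\]
each have probability at least $1-\alpha/2$. A union bound gives $\P[E_L \cap E_U] \geq 1-\alpha$. On $E_L \cap E_U$, combining with the deterministic sandwich yields
\[
\widehat{L}^{(\mathrm{Z}\psi)}_{\theta^\star,\alpha/2} - M_{\theta^\star}\,\Err(C) \leq 0 \leq \widehat{U}^{(\mathrm{Z}\psi)}_{\theta^\star,\alpha/2} + M_{\theta^\star}\,\Err(C),
\]
which is precisely the condition for $\theta^\star \in \widehat{C}^{(\mathrm{Z}\psi)}_\alpha$ from Equation~\ref{eq:confidence-interval-z-estimation}. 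This establishes the claim.

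There is no real obstacle here: the heavy lifting has been done by Lemma~\ref{thm:main-argument}, and the rest is definition-chasing plus a union bound. The only subtle point worth flagging explicitly in the writeup is that we do \emph{not} need a uniform (in $\theta$) concentration result, because $\widehat{C}^{(\mathrm{Z}\psi)}_\alpha$ is defined via per-$\theta$ checks and we only need the check to pass at the true $\theta^\star$; the price of this convenience is that the resulting set may be larger than one obtained from a uniform analysis, but validity comes cheaply.
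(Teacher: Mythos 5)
Your proof is correct and follows essentially the same route as the paper's: fix $\theta^\star$, invoke Lemma~\ref{thm:main-argument} for $\psi(\cdot;\theta^\star)$, substitute $\E[\psi(Y;\theta^\star)]=0$, and close with a union bound over the two one-sided coverage events. The only cosmetic difference is that you argue on the good events $E_L\cap E_U$ whereas the paper bounds the complement $\P[\theta^\star\notin\widehat{C}^{(\mathrm{Z}\psi)}_\alpha]$; your explicit remark that no uniformity over $\Theta$ is needed is exactly the right observation.
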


We can also bound the size of $\widehat{C}^{(\mathrm{Z}\psi)}_\alpha$; however, due to its more implicit nature, this is more involved than the case of the inference of a mean in the previous section. Below we establish a result under the assumption that the one-sided confidence intervals are $K$-smooth in $\theta$ and that $\Theta$ is bounded.

\begin{proposition}\label{thm:z-estimation-power}
    Consider $\Theta \subset \R$ bounded by $B$ (i.e., for all $\theta, \theta' \in \Theta$, $\|\theta - \theta'\| \leq B$). Suppose that $\widehat{L}^{(\mathrm{Z}\psi)}_{\theta,\alpha/2}$ and $\widehat{U}^{(\mathrm{Z}\psi)}_{\theta,\alpha/2}$ are both $K$-smooth in $\theta$ (i.e., differentiable w.r.t. $\theta$, with $K$-Lipschitz derivative), $\widehat{L}^{(\mathrm{Z}\psi)}_{\theta,\alpha/2} \leq \widehat{U}^{(\mathrm{Z}\psi)}_{\theta,\alpha/2}$ and $M_\theta \leq M$ for all $\theta$ and that $\frac{\dif}{\dif \theta} \widehat{L}^{(\mathrm{Z}\psi)}_{\theta^\star,\alpha/2}, \frac{\dif}{\dif \theta} \widehat{U}^{(\mathrm{Z}\psi)}_{\theta^\star,\alpha/2} \neq 0$. Then
    \begin{align*}
        \leb \widehat{C}^{(\mathrm{Z}\psi)}_\alpha
        &\leq \frac{1}{D_{\min}} \Biggl(
            \E[\leb \hull(\psi(C(X); \theta^\star))]
            + 2M\,\Err(C)
            \\ &\qquad\qquad\ \ + |\E[\inf \psi(C(X); \theta^\star)] - \widehat{L}^{(\mathrm{Z}\psi)}_{\theta^\star,\alpha/2}| + |\widehat{U}^{(\mathrm{Z}\psi)}_{\theta^\star,\alpha/2} - \E[\sup \psi(C(X); \theta^\star)]|
            \\ &\qquad\qquad\ \ + KB + \max \{ a_{\theta^\star}, b_{\theta^\star} \} \left\lvert 1 - D_{\min}/D_{\max} \right\rvert
        \Biggr),
    \end{align*}
    where $D_{\min} = \min \left\{ |\frac{\dif}{\dif \theta} \widehat{L}^{(\mathrm{Z}\psi)}_{\theta^\star,\alpha/2}|, |\frac{\dif}{\dif \theta} \widehat{U}^{(\mathrm{Z}\psi)}_{\theta^\star,\alpha/2}| \right\}$ and $D_{\max} = \max \left\{ |\frac{\dif}{\dif \theta} \widehat{L}^{(\mathrm{Z}\psi)}_{\theta^\star,\alpha/2}|, |\frac{\dif}{\dif \theta} \widehat{U}^{(\mathrm{Z}\psi)}_{\theta^\star,\alpha/2}| \right\}$.
\end{proposition}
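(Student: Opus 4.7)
I would recast the set in terms of the shifted endpoints
\[
  h_L(\theta) := \widehat{L}^{(\mathrm{Z}\psi)}_{\theta,\alpha/2} - M_\theta\,\Err(C),
  \qquad
  h_U(\theta) := \widehat{U}^{(\mathrm{Z}\psi)}_{\theta,\alpha/2} + M_\theta\,\Err(C),
\]
so that $\widehat{C}^{(\mathrm{Z}\psi)}_\alpha = \{\theta \in \Theta : h_L(\theta) \le 0 \le h_U(\theta)\}$, with $h_L \le h_U$. Both $h_L$ and $h_U$ inherit the $K$-smoothness and non-vanishing derivatives at $\theta^\star$ (up to the bounded derivative contribution of the $M_\theta$ shift, which is folded into the constants). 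Set $D_L := h_L'(\theta^\star)$ and $D_U := h_U'(\theta^\star)$, with $|D_L|, |D_U| \in [D_{\min}, D_{\max}]$. I would treat the representative sign case $D_L, D_U > 0$; the other three sign combinations follow by symmetric arguments, noting that $h_L \le h_U$ constrains the relevant geometry in each case.

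Next, I would linearize via Taylor's theorem: the $K$-Lipschitz derivative hypothesis gives, for all $\theta \in \Theta$,
\[
  |h_L(\theta) - h_L(\theta^\star) - D_L(\theta - \theta^\star)| \le \tfrac{K}{2}(\theta - \theta^\star)^2 \le \tfrac{KB}{2}|\theta - \theta^\star|,
\]
and likewise for $h_U$. Inverting the linearized constraints $h_L(\theta) \le 0$ and $h_U(\theta) \ge 0$ for $\theta - \theta^\star$ and absorbing the $O(KB)$ remainder into the corresponding $1/D$ factors (bounded by $1/D_{\min}$) yields
\[
  \leb \widehat{C}^{(\mathrm{Z}\psi)}_\alpha \;\le\; \frac{h_U(\theta^\star)}{D_U} - \frac{h_L(\theta^\star)}{D_L} + \frac{KB}{D_{\min}}.
\]
I would then rewrite the leading two terms via the (say, $D_L = D_{\min}$) decomposition
\[
  \frac{h_U(\theta^\star)}{D_U} - \frac{h_L(\theta^\star)}{D_L} \;=\; \frac{h_U(\theta^\star) - h_L(\theta^\star)}{D_{\min}} + h_U(\theta^\star)\left(\frac{1}{D_U} - \frac{1}{D_{\min}}\right),
\]
and use $|h_U(\theta^\star)| \le \max\{a_{\theta^\star}, b_{\theta^\star}\}$ (up to the $M\,\Err(C)$ shift) to bound the asymmetry term by $\frac{1}{D_{\min}}\max\{a_{\theta^\star},b_{\theta^\star}\}\,|1 - D_{\min}/D_{\max}|$.

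Finally, for the main term $h_U(\theta^\star) - h_L(\theta^\star)$, I would unfold the definitions of $\widehat{L}, \widehat{U}$ and telescope through the population quantities $\E[\inf \psi(C(X); \theta^\star)]$ and $\E[\sup \psi(C(X); \theta^\star)]$:
\[
  h_U(\theta^\star) - h_L(\theta^\star) = 2 M_{\theta^\star}\,\Err(C) + \E[\leb \hull(\psi(C(X);\theta^\star))] + \big(\widehat{U}^{(\mathrm{Z}\psi)}_{\theta^\star,\alpha/2} - \E[\sup \psi(C(X);\theta^\star)]\big) + \big(\E[\inf \psi(C(X);\theta^\star)] - \widehat{L}^{(\mathrm{Z}\psi)}_{\theta^\star,\alpha/2}\big),
\]
bounding the two signed CI slacks by their absolute values and using $M_\theta \le M$. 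Collecting yields exactly the four ``main'' summands together with $KB$ and $\max\{a_{\theta^\star},b_{\theta^\star}\}|1 - D_{\min}/D_{\max}|$, all divided by $D_{\min}$.

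\textbf{Main obstacle.} The delicate step is the asymmetric bookkeeping in the decomposition of $\tfrac{h_U(\theta^\star)}{D_U} - \tfrac{h_L(\theta^\star)}{D_L}$: the stated bound advertises a single correction $\max\{a_{\theta^\star}, b_{\theta^\star}\}|1 - D_{\min}/D_{\max}|$ that must absorb all ``excess'' from $D_L \ne D_U$ cleanly, which requires choosing the decomposition depending on which of $D_L, D_U$ attains $D_{\min}$. A secondary issue is the boundary case where the zeros of $h_L$ or $h_U$ fall outside $\Theta$, in which case one of the linearized constraints is vacuous; these cases only improve the bound since $\leb \widehat{C}^{(\mathrm{Z}\psi)}_\alpha \le B$ trivially.
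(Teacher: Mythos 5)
Your overall strategy is the same as the paper's (linearize around $\theta^\star$ via $K$-smoothness, invert the linear constraints, telescope through $\E[\sup\psi(C(X);\theta^\star)]$ and $\E[\inf\psi(C(X);\theta^\star)]$, and isolate the asymmetry between the two derivative magnitudes). However, two of the steps you flag as ``folded into the constants'' or ``up to the $M\,\Err(C)$ shift'' are genuine gaps, not bookkeeping.

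First, you linearize $h_L(\theta) = \widehat{L}^{(\mathrm{Z}\psi)}_{\theta,\alpha/2} - M_\theta\,\Err(C)$ and $h_U(\theta) = \widehat{U}^{(\mathrm{Z}\psi)}_{\theta,\alpha/2} + M_\theta\,\Err(C)$ directly. But the only smoothness hypothesis is on $\widehat{L}$ and $\widehat{U}$; $M_\theta$ need not be differentiable or even continuous in $\theta$, so $h_L, h_U$ need not be $K$-smooth and Taylor's theorem does not apply to them. The paper avoids this by first replacing $M_\theta$ with the constant upper bound $M$ (using $M_\theta \le M$), i.e.\ one relaxes $\{h_L \le 0 \le h_U\}$ into $\{\ell(\theta) - M\,\Err(C) \le 0 \le u(\theta) + M\,\Err(C)\}$ with $\ell,u$ the raw one-sided CIs, and only then linearizes $\ell$ and $u$. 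You should do the constant-$M$ relaxation before linearizing.

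Second, your bound on the asymmetry term uses ``$|h_U(\theta^\star)| \le \max\{a_{\theta^\star}, b_{\theta^\star}\}$.'' This is false in general: $h_U(\theta^\star)$ contains $\widehat{U}^{(\mathrm{Z}\psi)}_{\theta^\star,\alpha/2}$, which is a data-dependent confidence bound and is not a priori constrained to $[a_{\theta^\star}, b_{\theta^\star}]$. Only the population quantity $\E[\sup\psi(C(X);\theta^\star)]$ is bounded there (since $\psi(Y;\theta^\star)\in[a_{\theta^\star},b_{\theta^\star}]$ a.s.\ implies the sup over a nonempty set lies in that range). The fix is to perform the telescoping through $\E[\sup\psi]$ and $\E[\inf\psi]$ \emph{before} you split off the asymmetry term, so that the coefficient of $\bigl(\tfrac{1}{D_U} - \tfrac{1}{D_{\min}}\bigr)$ is $\E[\sup\psi(C(X);\theta^\star)]$ rather than $h_U(\theta^\star)$; the slack $|\widehat{U} - \E[\sup\psi]|$ and the $M\,\Err(C)$ term then appear as separate, cleanly bounded summands divided by $D_{\min}$. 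With that reordering, your decomposition coincides with the paper's and the stated bound drops out.
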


This means that the size of the resulting confidence interval is mainly governed by the average predictive interval size, $M$ and $\Err(C)$, and the tightness of the one-sided confidence intervals, as before, but now also takes into account how quickly $\psi$ passes through 0 at $\theta^\star$ (via the derivatives) and how ``well-behaved'' the one-sided confidence intervals are over $\Theta$.

In the case of inference of a mean, where $\psi(Y; \theta) = \phi(Y) - \theta$ with sufficiently regular methods for obtaining the one-sided confidence intervals (e.g. CLT-based CIs or Hoeffding bounds), the derivatives will equal one everywhere (i.e., $D_{\min} = D_{\max} = 1$) and the one-sided confidence intervals will be 0-smooth (i.e., $K = 0$), and we recover Proposition~\ref{thm:mean-estimation-power} except for the modulus in the terms concerning the tightness of the one-sided CIs.

A similar procedure is also applicable to M-estimation problems, in which we want to infer $\theta^\star = \argmin_{\theta \in \Theta} \E_Y[\ell(Y; \theta)]$ with $\ell$ (sub)differentiable in $\theta$.
Much like Z-estimation, M-estimation problems are broadly applicable, including not only means, quantiles and regression coefficients but also more involved estimands such as robust statistics, maximum likelihood estimates with nonlinear models and more.
For boundedness, we make the assumption that $\ell'(Y; \theta) \subset [a_\theta, b_\theta]$ almost surely for all $\theta \in \Theta$, and let $M_\theta = b_\theta - a_\theta$ for convenience.

Since the loss is differentiable, the minimum $\theta^\star$ occurs in a point where $\E[\frac{\dif}{\dif \theta} \ell(Y; \theta^\star)] = 0$ (if $\ell$ is furthermore convex in $\theta$, then the two are equivalent). This thus reduces the M-estimation problem to a Z-estimation one, which we can solve:
for each $\theta \in \Theta$, produce lower and upper $(1 - \alpha/2)$-confidence intervals $[\widehat{L}^{(\mathrm{M}\ell)}_{\theta,\alpha/2}, +\infty)$ and $(-\infty, \widehat{U}^{(\mathrm{M}\ell)}_{\theta,\alpha/2}]$ for $\E[\inf \frac{\dif}{\dif \theta} \ell(C(X); \theta^\star)]$ and $\E[\sup \frac{\dif}{\dif \theta} \ell(C(X); \theta^\star)]$, respectively, and produce the set
\begin{equation}\label{eq:confidence-interval-m-estimation}
    \widehat{C}^{(\mathrm{M}\ell)}_\alpha := \left\{ \theta \in \Theta : \widehat{L}^{(\mathrm{M}\ell)}_{\theta,\alpha/2} - M_\theta\,\Err(C) \leq 0 \leq \widehat{U}^{(\mathrm{M}\ell)}_{\theta,\alpha/2} + M_\theta\,\Err(C) \right\}.
\end{equation}
This is a valid $(1-\alpha)$ confidence interval for $\theta^\star$:
\begin{proposition}\label{thm:m-est-valid}
    For any $\alpha \in (0, 1)$, let $\widehat{C}^{(\mathrm{M}\ell)}_\alpha$ be as in Equation~\ref{eq:confidence-interval-m-estimation}. Then $\widehat{C}^{(\mathrm{M}\ell)}_\alpha$ is a valid confidence interval for $\theta^\star$, i.e.,
    \[ \P\left[\theta^\star \in \widehat{C}^{(\mathrm{M}\ell)}_\alpha\right] \geq 1 - \alpha. \]
\end{proposition}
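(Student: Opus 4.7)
The plan is to reduce the M-estimation problem to a Z-estimation problem and then appeal directly to Proposition~\ref{thm:z-est-valid}. First I would set $\psi(Y;\theta) := \frac{\dif}{\dif\theta}\ell(Y;\theta)$ (selecting a measurable subgradient if $\ell$ is only subdifferentiable). Under the (sub)differentiability of $\ell$ stated in the excerpt, the M-estimand $\theta^\star$ satisfies the first-order condition $\E[\psi(Y;\theta^\star)] = 0$, as noted in the paragraph preceding the statement; hence $\theta^\star$ is simultaneously a Z-estimand for this choice of $\psi$.

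Next I would observe that Equation~\ref{eq:confidence-interval-m-estimation} is, with this choice of $\psi$, syntactically identical to Equation~\ref{eq:confidence-interval-z-estimation}: the one-sided intervals $\widehat{L}^{(\mathrm{M}\ell)}_{\theta,\alpha/2}$ and $\widehat{U}^{(\mathrm{M}\ell)}_{\theta,\alpha/2}$ are by construction $(1-\alpha/2)$-confidence intervals for $\E[\inf \psi(C(X);\theta)]$ and $\E[\sup \psi(C(X);\theta)]$, and the boundedness hypothesis $\ell'(Y;\theta)\in[a_\theta,b_\theta]$ is exactly the boundedness hypothesis $\psi(Y;\theta)\in[a_\theta,b_\theta]$ required by Lemma~\ref{thm:main-argument} and Proposition~\ref{thm:z-est-valid}. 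Therefore $\widehat{C}^{(\mathrm{M}\ell)}_\alpha = \widehat{C}^{(\mathrm{Z}\psi)}_\alpha$ pointwise, and Proposition~\ref{thm:z-est-valid} applied to $\psi = \ell'$ gives $\P[\theta^\star\in\widehat{C}^{(\mathrm{M}\ell)}_\alpha]\geq 1-\alpha$.

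The only genuine obstacle is justifying the first-order condition $\E[\ell'(Y;\theta^\star)] = 0$. In the convex case this is immediate. In the non-convex case one needs $\theta^\star$ to be an interior point of $\Theta$ together with a justification for interchanging $\frac{\dif}{\dif\theta}$ and $\E$; the boundedness assumption $|\ell'(Y;\theta)|\leq \max(|a_\theta|,|b_\theta|)$ already supplies the integrable envelope needed for dominated convergence, so this step is routine under the paper's standing regularity. With that in place, the proof is essentially a one-line corollary of Proposition~\ref{thm:z-est-valid}.
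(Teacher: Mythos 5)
Your proposal is correct and takes essentially the same approach as the paper: reduce to Z-estimation via $\psi(Y;\theta)=\frac{\dif}{\dif\theta}\ell(Y;\theta)$, establish $\E[\psi(Y;\theta^\star)]=0$ by interchanging derivative and expectation (dominated convergence), and invoke Proposition~\ref{thm:z-est-valid}. Your added remarks on the subgradient selection and the integrable envelope are sound elaborations of the same argument.
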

We can also similarly bound the size of the resulting confidence interval, which now looks at the steepness of the one-sided intervals for the derivatives, i.e., the curvature of $\ell$ around $\theta^\star$; see Theorem~\ref{thm:m-estimation-power} in the appendix.

\subsection{Inference with e-values}

Following the work of \citep{evalue-ppi}, we now extend our set of inference tasks to those powered by e-values, a modern and enticing alternative to p-values \citep{savi,evalue-book}.
An e-value for a null hypothesis $H_0$ is an nonnegative real random variable $E$ such that if $H_0$ holds then $\E[E] \leq 1$ (and ideally $\E[E] \gg 1$ otherwise).
By Markov's inequality, it is unlikely that the e-value achieves a high value under the null ($\P[E > a] \leq \E[E]/a \leq 1/a$), and so a high e-value provides evidence against the null.
Furthermore, e-values satisfy many desirable properties missed by p-values while being highly versatile; we refer the interested reader to \citep{savi} and \citep{evalue-book} for an introduction.

Consider the problem of testing a null hypothesis $H_0$.
Let $E_n$ be an e-value with a test supermartingale structure,
which can be written in the form $E_n := \prod_{i=1}^n e_i(Y_i)$
for a predictable sequence $(e_i)_{i=1}^\infty$ of `components' of the e-value; i.e. each $e_i$ can be arbitrarily dependent on the samples before time $i$ (but nothing else).
Analogous to the previous sections, we will also require a boundedness condition, in that for all $i$, $e_i(Y) \in [a_i, b_i]$ almost surely for some predictable sequences $(a_i)_{i=1}^\infty$ and $(b_i)_{i=1}^\infty$, and with $a_i > 0$ for all $i$.
These boundedness conditions can be enforced by simple rescaling and clipping, albeit at a slight loss of power.

With a possibly-moving predictable sequence of conformal predictors $(C_i)_{i=1}^\infty$ in hand,
the conformal prediction-powered e-value can be constructed as follows:
\begin{equation} \label{eq:cppi-evalue}
    E^{\mathrm{ppi}-(C)}_n := \prod_{i=1}^n \rescale_{\eta_i}\Bigl( \inf e_i(C_i(X_i)) - (b_i - a_i) \Err(C_i) \Bigr),
\end{equation}
where $(\eta_i)_{i=1}^\infty$ is a predictable sequence with $0 \leq \eta_i \leq (1 - a_i - (b_i - a_i) \Err(C_i))^{-1}$ for all $i = 1, 2, \ldots$, $\rescale_\eta (e) = 1 + \eta (e - 1)$ and $e_i(C_i(X_i)) = \{e_i(y) : y \in C_i(X_i)\}$ for convenience.
The sequence $(\eta_i)$ is analogous to the bets usually present in e-values from the testing by betting literature, cf. \citep{testing-by-betting,evalue-mean,savi}; it ensures that the e-values remain nonnegative as well as allowing for gains in power, e.g. when the $\eta$s are chosen to approximately maximize the e-value's growth rate.
It follows that $E^{\mathrm{ppi}-(C)}_n$ is a valid e-value, inheriting the test supermartingale structure of $E_n$.

\begin{proposition}\label{thm:evalue-valid-supermartingale}
    Let $E^{\mathrm{ppi}-(C)}_n$ be as in Equation~\ref{eq:cppi-evalue}. Then $(E^{\mathrm{ppi}-(C)}_1, E^{\mathrm{ppi}-(C)}_2, \ldots)$ is a test supermartingale, and $E^{\mathrm{ppi}-(C)}_\tau$ is an e-value for any stopping time $\tau$.
\end{proposition}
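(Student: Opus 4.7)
The plan is to exploit the product structure of $E^{\mathrm{ppi}-(C)}_n$ and verify the two defining properties of a test supermartingale factor by factor. Writing
\[ F_i := \rescale_{\eta_i}\Bigl(\inf e_i(C_i(X_i)) - (b_i - a_i)\Err(C_i)\Bigr), \]
we have $E^{\mathrm{ppi}-(C)}_n = E^{\mathrm{ppi}-(C)}_{n-1} \cdot F_n$ with $E^{\mathrm{ppi}-(C)}_0 = 1$ by the empty-product convention. Let $\filt_i$ denote the filtration generated by the samples through time $i$; by predictability, both $C_i$ and $\eta_i$ are $\filt_{i-1}$-measurable. It therefore suffices to show (i) $F_i \geq 0$ pointwise and (ii) $\E[F_i \mid \filt_{i-1}] \leq 1$ under $H_0$, since multiplying (ii) by the nonnegative $\filt_{i-1}$-measurable factor $E^{\mathrm{ppi}-(C)}_{i-1}$ yields the supermartingale inequality.

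For (i), I unpack the rescaling. Writing $Z_i := \inf e_i(C_i(X_i)) - (b_i - a_i)\Err(C_i)$, the boundedness $e_i \in [a_i, b_i]$ a.s.\ gives $Z_i \geq a_i - (b_i - a_i)\Err(C_i)$, so
\[ 1 - Z_i \leq 1 - a_i + (b_i - a_i)\Err(C_i). \]
Since $F_i = 1 + \eta_i(Z_i - 1)$, nonnegativity amounts to $\eta_i(1 - Z_i) \leq 1$, which is precisely what the prescribed upper bound on $\eta_i$ enforces.

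For (ii), I would invoke Lemma~\ref{thm:main-argument} conditionally on $\filt_{i-1}$, with $\phi = e_i$ and $[a,b] = [a_i, b_i]$. Under the null, conditional on the past, $C_i$ is a fixed set-predictor and $(X_i, Y_i) \sim P$, so the law-of-total-expectation argument in the proof of Lemma~\ref{thm:main-argument} carries over verbatim in the regular-conditional sense, giving
\[ \E[\inf e_i(C_i(X_i)) \mid \filt_{i-1}] - (b_i - a_i)\Err(C_i) \leq \E[e_i(Y_i) \mid \filt_{i-1}] \leq 1, \]
where the final inequality is the test-supermartingale structure of $(E_n)$ under $H_0$. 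Since $\rescale_{\eta_i}$ is affine and $\eta_i \geq 0$ is $\filt_{i-1}$-measurable, $\E[F_i \mid \filt_{i-1}] = 1 + \eta_i(\E[Z_i \mid \filt_{i-1}] - 1) \leq 1$, as needed.

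The stopping-time claim is then immediate from Ville's inequality (equivalently, the optional stopping theorem for nonnegative supermartingales): for any stopping time $\tau$, $\E[E^{\mathrm{ppi}-(C)}_\tau] \leq E^{\mathrm{ppi}-(C)}_0 = 1$, so $E^{\mathrm{ppi}-(C)}_\tau$ is a valid e-value. The main subtlety is the conditional invocation of Lemma~\ref{thm:main-argument}; one must check that treating $C_i$, $a_i$, $b_i$, and $\Err(C_i)$ as $\filt_{i-1}$-measurable ``constants'' after conditioning preserves the unconditional argument, which is standard given the predictability assumptions and bounded nature of the integrands involved.
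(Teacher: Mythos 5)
Your proof is correct and follows essentially the same route as the paper's: factor the conditional expectation through $E^{\mathrm{ppi}-(C)}_{i-1}$, use linearity of $\rescale_{\eta_i}$, invoke Lemma~\ref{thm:main-argument} conditionally on $\filt_{i-1}$, and finish with the test-supermartingale property of the original $E_n$ plus Ville/optional stopping for the e-value claim. One small note: your own derivation shows nonnegativity requires $\eta_i \le (1 - a_i + (b_i - a_i)\Err(C_i))^{-1}$, which is a slightly \emph{tighter} constraint than the $(1 - a_i - (b_i - a_i)\Err(C_i))^{-1}$ stated after Equation~\ref{eq:cppi-evalue} — this appears to be a sign typo in the paper rather than a flaw in your argument.
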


We can also analyze the power of our e-values. The natural way of measuring the power of an e-value is by the means of its expected growth rate \citep{kelly}. For conformal prediction-powered e-values, it will be close to that of the original e-value as long as the conformal predictive sets are sufficiently small and with a low $\Err$.

\begin{proposition}\label{thm:prop-power-evalue}
    If $e_i(\cdot) \in [a_i, b_i]$ for every $i$, then there exists some constant $r>0$ independent of $n$ for which
    \begin{align*}
        \E\left[ \frac{1}{n} \log E^{\mathrm{ppi}-(C)}_n \right]
        &\geq \E\left[ \frac{1}{n} \log E_n \right] - \frac{1}{n} \sum_{i=1}^n \E[\leb \hull(\log e_i(C_i(X_i)))]
        \\ &\quad - \frac{r}{n} \sum_{i=1}^n \E\left[ h_i(\eta_i) \Err(C_i) \right] - \frac{r}{n} \sum_{i=1}^n \E\bigl[ |1 - \eta_i| \left|\inf e_i(C_i(X_i)) - 1\right| \bigr],
    \end{align*}
    where $h_i(\eta_i) = \log \frac{b_i}{a_i} + \eta_i (b_i - a_i)$, which is increasing in $\eta_i$.
\end{proposition}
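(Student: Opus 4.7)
The strategy is to decompose $\log E^{\mathrm{ppi}-(C)}_n - \log E_n = \sum_{i=1}^n [\log \rescale_{\eta_i}(S_i - M_i \Err(C_i)) - \log e_i(Y_i)]$, where I abbreviate $S_i := \inf e_i(C_i(X_i))$ and $M_i := b_i - a_i$, and to isolate three distinct sources of slack per summand: the conformal imputation error, the perturbation due to the $-M_i \Err(C_i)$ offset inside $\rescale_{\eta_i}$, and the ``rescaling-vs.-identity'' gap between $\log \rescale_{\eta_i}(\cdot)$ and $\log(\cdot)$. Concretely, I write each summand as $(A) + (B)$ with
\[ (A) := \log \rescale_{\eta_i}(S_i - M_i \Err(C_i)) - \log S_i, \qquad (B) := \log S_i - \log e_i(Y_i). \]

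For (B), since $C_i$ is predictable and $\log e_i(Y_i) \in [\log a_i, \log b_i]$ almost surely, Lemma~\ref{thm:main-argument} applied conditionally on the natural filtration $\mathcal{F}_{i-1}$ with $\phi = \log e_i$ yields $\E[\log e_i(Y_i) \mid \mathcal{F}_{i-1}] \leq \E[\log \sup e_i(C_i(X_i)) \mid \mathcal{F}_{i-1}] + \log(b_i/a_i) \Err(C_i)$. Subtracting $\log S_i = \inf \log e_i(C_i(X_i))$ and noting $\log \sup - \log \inf = \leb \hull(\log e_i(C_i(X_i)))$ on $\R$ gives
\[ \E[(B) \mid \mathcal{F}_{i-1}] \;\geq\; -\E[\leb \hull(\log e_i(C_i(X_i))) \mid \mathcal{F}_{i-1}] - \log(b_i/a_i)\Err(C_i). \]
For (A), I further split through $\log(1 + \eta_i(S_i - 1))$ into an offset piece $[\log(1 + \eta_i(S_i - M_i \Err(C_i) - 1)) - \log(1 + \eta_i(S_i - 1))]$ and a rescaling piece $[\log(1 + \eta_i(S_i - 1)) - \log S_i]$. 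Two mean-value-theorem applications bound these respectively by constants times $\eta_i M_i \Err(C_i)$ and $|1 - \eta_i||S_i - 1|$, using $S_i \geq a_i > 0$ and the bet constraint on $\eta_i$ to keep the relevant denominators $1 + \xi(S_i - 1)$ bounded away from zero.

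Summing over $i = 1, \dots, n$, taking total expectations via the tower rule (using predictability of $C_i, \eta_i, a_i, b_i$), dividing by $n$, and letting $r \geq 1$ absorb the two Lipschitz constants from (A) collapses the additive $\Err(C_i)$-contributions via $\log(b_i/a_i)\Err(C_i) + r\,\eta_i(b_i - a_i)\Err(C_i) \leq r\,h_i(\eta_i)\Err(C_i)$, yielding the claimed inequality. The main obstacle is the mean-value-theorem estimate on the offset piece of (A): the denominators $1 + \xi(S_i - 1)$ can in principle vanish, so obtaining a Lipschitz constant that is uniform in $i$ (hence the $n$-independent constant $r$) requires a careful use of the bet constraint on $\eta_i$ together with uniform positive lower bounds on the $a_i$, beyond which $r$ would have to scale with $i$.
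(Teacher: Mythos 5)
Your decomposition mirrors the paper's proof almost step for step: the split through $\rescale_{\eta_i}(S_i)$ that separates the $\eta_i(b_i-a_i)\Err(C_i)$ offset penalty from the $|1-\eta_i|\,|S_i-1|$ rescaling penalty matches the paper's triangle-inequality expansion inside a single Lipschitz bound for $\log$, and your treatment of $(B)$ via Lemma~\ref{thm:main-argument} applied to $\phi=\log e_i$ together with $\log\sup - \log\inf = \leb\hull(\log e_i(C_i(X_i)))$ is exactly the paper's argument. Your closing caveat is apt and is actually glossed over in the paper: there, $r$ is set to $\max\{\rescale_{\eta_i}(a_i - (b_i-a_i)\Err(C_i))^{-1},\,1\}$, which as written depends on $i$, so making $r$ genuinely $n$-independent requires precisely the uniform lower bound on the $a_i$ and uniform control on the $\eta_i$ that you identify.
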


Proposition~\ref{thm:prop-power-evalue} makes apparent a trade-off in the choice of the $(\eta_i)_{i=1}^\infty$: by choosing a lower $\eta_i$ we reduce the effect of the $(b_i-a_i) \Err(C_i)$ penalty on the e-values, but incur a slight loss in power due to the rescaling.
An optimal balance can be struck by choosing log-optimal $(\eta_i)_{i=1}^\infty$, as is usual in the testing by betting literature.

These e-values can also be directly used for confidence intervals/sequences and general e-value-based procedures; see Appendix~\ref{suppl:algorithms-atop-evalues}.

\section{Experiments and Case Studies}

\begin{figure}
    \centering
    \includegraphics[width=.9\textwidth]{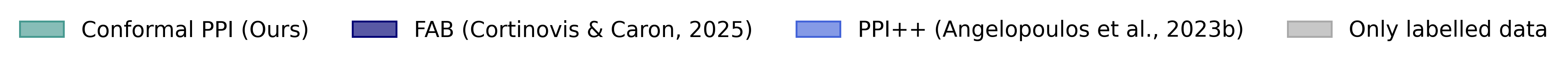}

    \begin{subfigure}[t]{0.48\textwidth}
        \centering
        \includegraphics[height=3.5cm,trim={0 0 16.5cm 0},clip]{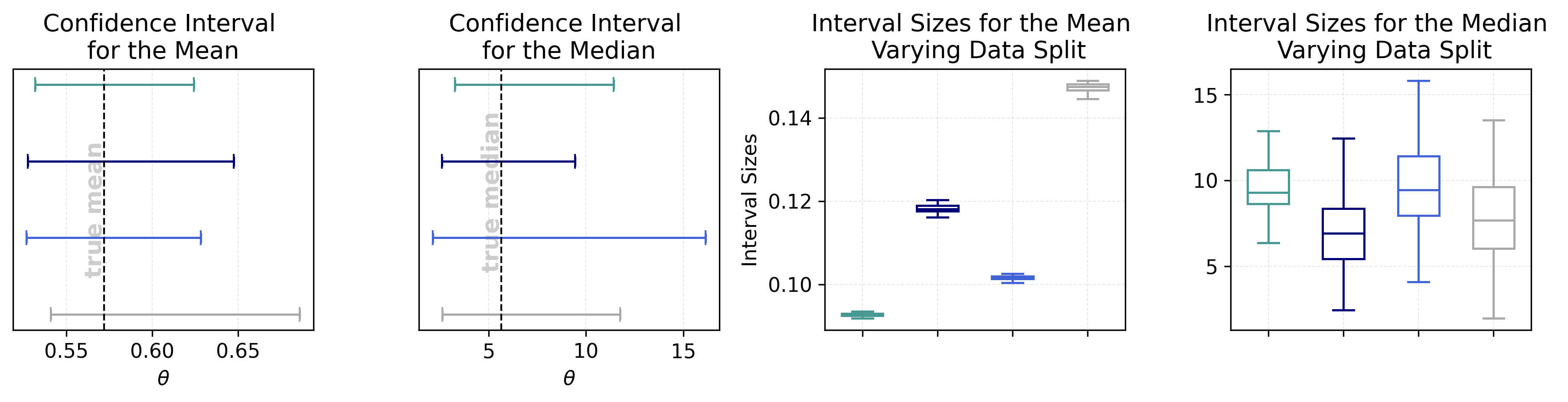}
        \vspace{-1em}
        \caption{}
    \end{subfigure}\ \quad
    \begin{subfigure}[t]{0.48\textwidth}
        \centering
        \makebox[.9\columnwidth][c]{\includegraphics[height=3.5cm,trim={14.5cm 0 0 0},clip]{figures/fig1_all.png}}
        \vspace{-1em}
        \caption{}
    \end{subfigure}

    \caption{\textbf{Our method is comparable to existing prediction-powered procedures.} We conduct experiments on two datasets where previous prediction-powered methods are applicable: one on the prevalence of phishing attacks (a mean), and another on characterizing gene expression levels (a median). In (a) we see one realization of our CIs along with baselines, while in (b) we analyze the distribution of the interval sizes over varying data splits.
    In both cases our procedure outperforms only using labelled data, while edging over prior methods for the mean estimation task.
    }
    \label{fig:results-at-a-glance}
\end{figure}

To empirically assess our method, we devise a series of experiments on real-world datasets.
We first consider the estimation of means and quantiles, in which we can compare our approach to previous methods for prediction-powered inference
(Section~\ref{sec:experiment-comparison}).
We then turn to more elaborate scenarios, which our procedure naturally solves but were out of reach for previous methods: first for prediction-powered inference with private labelled data (Section~\ref{sec:experiment-private}) and then for prediction-powered anytime-valid hypothesis testing on time series sans active data collection (Section~\ref{sec:experiment-timeseries}).
Experiment details can be found in Appendix~\ref{suppl:experiment-details}.

Code for all experiments can be found on \url{https://github.com/dccsillag/experiments-conformal-ppi}.
All experiments were run on an AMD Ryzen 9 5950X CPU, with 64GB of RAM.

\subsection{Comparison with previous prediction-powered inference methods}\label{sec:experiment-comparison}

We consider two inferential tasks: estimating the prevalence of phishing websites (which is a probability, and thus a mean), and the inference of gene expression levels, as measured by their quantiles (in particular, a median).
Phishing is one of the most common types of cybercrime, and quantifying the prevalence of phishing domains allows cybersecurity firms and ISPs to gauge the scale of the problem and allocate resources to prevent these attacks.
As for gene expression levels, these can be used to better understand cis-regulation in humans, which is important to the study of complex diseases. As is usual in the prediction-powered inference literature, we evaluate our procedures on large labelled datasets, namely those of \citep{dataset-phishing} and \citep{dataset-geneexpression} for the phishing and gene expression level tasks, respectively.

For the phishing dataset, we allocate most of the data for training a predictive model; for the gene expression dataset, we use the predictions from the readily available model of \citep{dataset-geneexpression}.
We then split the remaining data between a large test set (where we discard the labels $Y$) and a smaller calibration set (for which we will use both $X$ and $Y$).
On this calibration set, we perform split conformal prediction to obtain a calibrated set-predictor, using the conformity score $(x, y) \mapsto -\widehat{p}(y \mid x)$ for the phishing dataset and $(x, y) \mapsto \lvert \widehat{\mu}(x) - y \rvert$ for the gene expression dataset,\footnote{The pretrained model only predicts $\widehat{\mu}(x)$, so we cannot use a more adaptive score (such as conformalized quantile regression).} where $\widehat{p}$ and $\widehat{\mu}$ denote the respective predictive models.

\begin{figure}
    \centering
    \includegraphics[width=.9\textwidth]{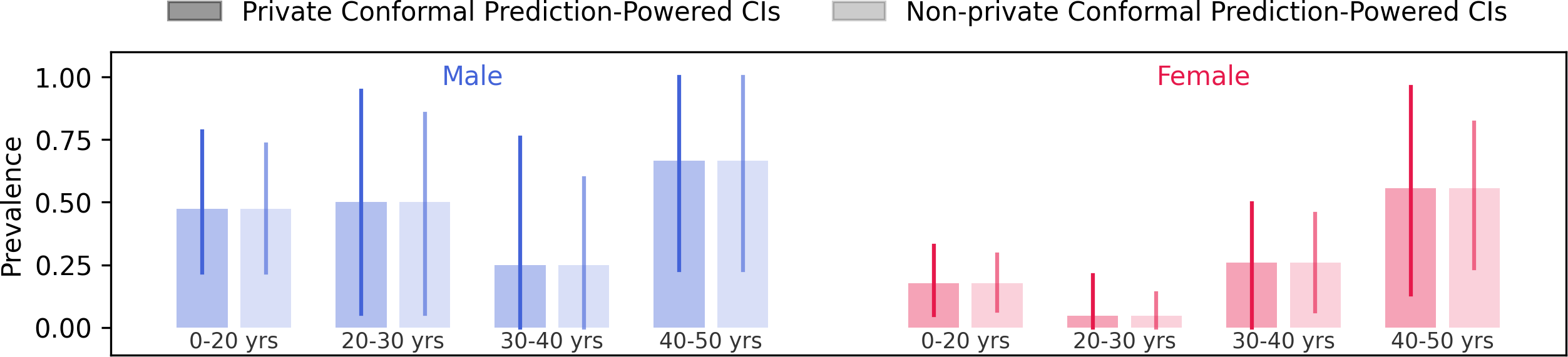}
    \caption{\textbf{Conformal prediction-powered inference with differential privacy.}
    We apply our method to analyze the recurrence of thyroid cancer atop private patient data.
    With a single inference-agnostic and differentially-private calibration, we are able to do prediction-powered inference for the probabilities of recurrence for various strata, with minimal increase in interval size compared to a non-private calibration.
    }
    \label{fig:private}
    \vspace{-1.5em}
\end{figure}

We compare four methods.
\textbf{Conformal PPI (Ours):} we use the conformal predictors fit on the calibration set, and compute our conformal prediction-powered CIs on the test set as outlined in Sections~\ref{sec:mean-estimation} and \ref{sec:z-m-estimation}.
\textbf{PPI++ \citep{ppipp}:} the calibration set is used in conjunction with the test set to form an unbiased estimate of the loss of an M-estimator, with a data-dependent `power tuning' parameter $\lambda$. Asymptotic analysis then allows for the construction of valid CIs.
\textbf{FAB \citep{fab}:} FAB extends PPI/PPI++ by introducing a prior over the quality of the predictive model. It provides tighter CIs when the observed prediction quality is likely under the prior, while ensuring graceful degradation otherwise (for well-chosen priors, e.g., horseshoe prior).
\textbf{Only labelled samples:} we compute a classical CI using the calibration data, ignoring the test set.

Figure~\ref{fig:results-at-a-glance} shows these procedures in action. In particular we showcase instances of our confidence intervals for the mean and median of the labels of our datasets, along with the distribution of their interval sizes over varying data split seeds.
Our approach is competitive with previous methods, beating the intervals that use only the labelled samples.
In the case of the mean, our method in fact provides the tightest confidence intervals.
For the median ours is not as tight as FAB \citep{fab}, but surpasses PPI++ \citep{ppipp}.
We also note that our method achieves the smallest variance.

\subsection{Prediction-powered inference with private labelled data}\label{sec:experiment-private}

In this section we illustrate the use of our method for analyzing the recurrence of thyroid cancer. As with many medical applications, access to medical records is required. Due to their sensitive nature, all labelled data must be treated in a differentially private manner; this is beyond the scope of previous prediction-powered procedures, which do not satisfy differential privacy and thus may leak information.

We use the dataset of \citep{dataset-thyroid}, which contains readily accessible clinical data (e.g. from surveys), along with an indicator of whether the patient's cancer recurred.
We split this dataset into training, calibration and test sets.
In the training set, we fit a model to predict the recurrence of thyroid cancer.
The calibration set is then used for the differentially private conformal prediction method of \citep{conformal-private}, using the conformity score $(x, y) \mapsto \widehat{p}(y \mid x)$.
Finally, we use this conformal predictor to perform several inferences on the test set, estimating the probabilities of recurrence for different strata of the population.
The results can be seen in Figure~\ref{fig:private}; we find that the differentially private calibration yields only minor increases of the interval sizes while vastly increasing safety.

Also worth highlighting is that
our procedure allows us to use a single private calibration for multiple inferences, which can even be defined post-hoc.
This is in contrast to previous prediction-powered methods, which require access to the calibration data for every inference, potentially compromising privacy.

\begin{figure}
    \centering
    \includegraphics[width=\textwidth]{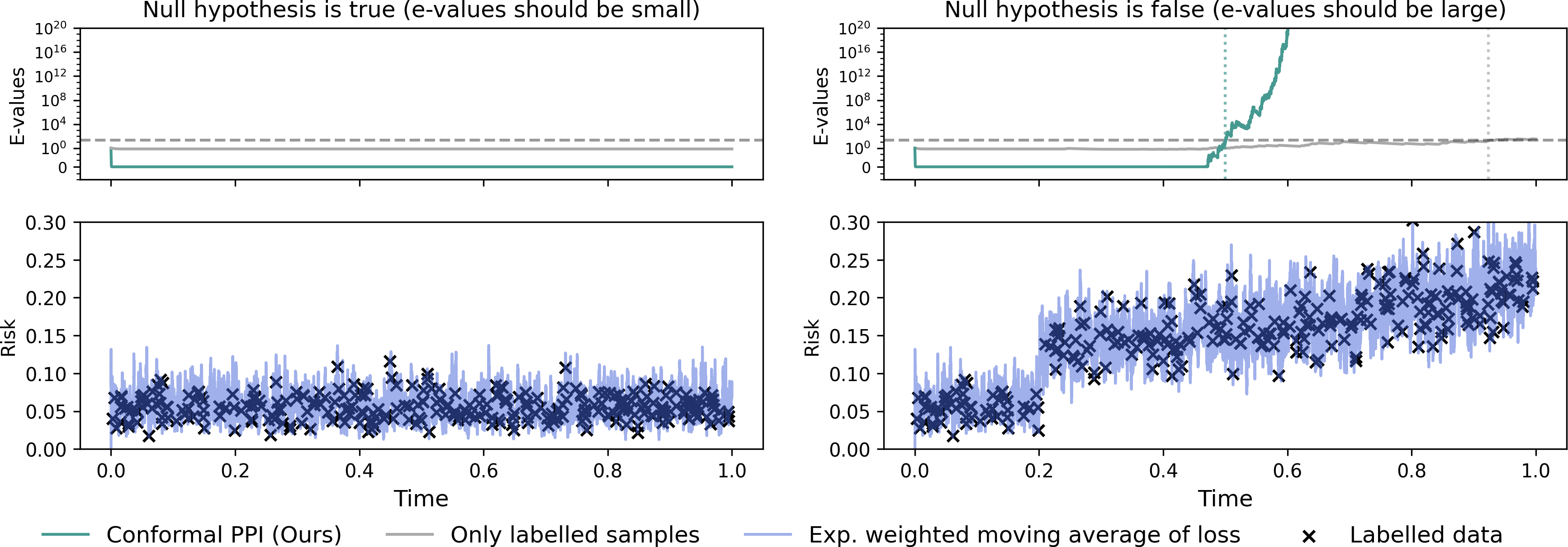}
    \caption{\textbf{Conformal prediction-powered continuous risk monitoring.} Our method can also be naturally applied for continuous risk monitoring by simply using online conformal prediction methods for the calibration. It satisfies strong anytime-valid guarantees in a dynamic setting without requiring active data collection. The resulting procedure rejects nulls much more quickly than simply using labelled samples, attaining high statistical power.}
    \label{fig:timeseries}
    \vspace{-0.9em}
\end{figure}

\subsection{Prediction-powered risk monitoring via online conformal prediction}\label{sec:experiment-timeseries}

Consider the task of tracking the risk of a deployed model on-line, so that we ensure it never goes past some determined safety level.
In this setting, continuously receive inputs for our predictive model, but only occasionally receive labels that would allow us to assess the correctness of our predictions.
This is a problem with significant temporal structure, putting it out-of-reach of most prediction-powered methods (which can only handle static i.i.d. settings).
As far as we are aware the only applicable method is that of \citep{evalue-ppi}, but it requires an active data collection regime; ours is trivially applicable to an observational regime.

The task can be framed as an anytime-valid test for the null hypothesis that the risk is within the safety level at all times; such a hypothesis test can then be done using, for example, the e-value framework of \citep{evalue-risk-monitoring}.

For our experiment, we use the dataset of \citep{dataset-forestcover} for forest cover type prediction. We create two versions of the dataset: the original one, in which the null hypothesis holds (i.e., no distribution shift), and another one in which we increasingly poison the data by selecting harder samples with increasing probability past a change-point, rendering the null hypothesis false.

Each version of the dataset is partitioned into training, validation, and test splits.
A predictive model is fit on the training data,
whose loss we then estimate on the validation set.
Our desired safety level is then taken to be this validation loss plus a small tolerance threshold.
Still on the validation set, we train an auxiliary model to infer the predictive model's residuals.
Finally, on the test set we monitor the on-line risk: we use our occasional labelled samples for the online conformal prediction method of \citep{cp-online-sgd} atop the auxiliary model, and use the resulting set predictor for our conformal prediction-powered e-values. The $(\eta_i)_{i=1}^\infty$ are chosen to approximately maximize the growth rate (cf. Appendix~\ref{suppl:experiment-details-evalues}).

Figure~\ref{fig:timeseries} shows the results of our experiment, comparing it to only using the occasional labelled samples. When the null is false, our prediction-powered e-values reject it much more quickly and confidently than only using the labelled data, while guaranteeing a low false positive rate.

\section{Conclusion}

In this paper, we established a general connection between prediction-powered inference and conformal prediction, enabling prediction-powered methods with additional guarantees like privacy and robustness.
Our framework leverages calibrated conformal set-predictors to inherit rich properties from the conformal literature, overcoming the case-specific limitations of previous work and opening new practical applications previously out of reach.
Beyond being readily applicable to diverse practical settings, we believe our framework establishes an important baseline for future research on prediction-powered inference with additional guarantees.

\section*{Acknowledgements}

This research is funded by Canada’s International Development Research Centre (IDRC) (Grant No. 109981) and UK International Development.
This work was partially funded by FAPERJ (CJS and GTG) and CNPq (CJS).

\bibliography{paper}
\bibliographystyle{iclr2026_conference}

\appendix

\section{Theorems and Proofs}

\begin{lemma}[Lemma~\ref{thm:main-argument} in the main text]
    Let $C : \mathcal{X} \to 2^{\mathcal{Y}}$ be a set predictor and suppose that $\phi(Y) \in [a, b]$ almost surely. Then
    \[ \E[\inf \phi(C(X))] - M\,\Err(C) \leq \E[\phi(Y)] \leq \E[\sup \phi(C(X))] + M\,\Err(C). \]
\end{lemma}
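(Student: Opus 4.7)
The plan is to prove both inequalities by the same two-step argument, applying the law of total expectation conditional on whether the true label $Y$ lands in the predictive set $C(X)$. This is exactly the strategy already outlined in the proof sketch for the lower bound; my task is to make it fully rigorous and complete the symmetric argument for the upper bound.

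For the lower bound, I would rearrange the desired inequality into the equivalent form $\E[\inf \phi(C(X)) - \phi(Y)] \leq M\,\Err(C)$ and decompose the left-hand side as in the sketch. On the event $\{Y \in C(X)\}$ we have $\phi(Y) \in \phi(C(X))$, hence $\inf \phi(C(X)) \leq \phi(Y)$, making the conditional expectation nonpositive. On the complementary event $\{Y \notin C(X)\}$, both $\inf \phi(C(X))$ and $\phi(Y)$ lie in $[a,b]$ by the boundedness hypothesis, so their difference is at most $M = b - a$. Multiplying each conditional expectation by its respective probability and using $\P[Y \notin C(X)] = \Err(C)$ closes the bound.

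For the upper bound, I would argue symmetrically on $\E[\phi(Y) - \sup \phi(C(X))]$. On $\{Y \in C(X)\}$, $\phi(Y) \leq \sup \phi(C(X))$ so the conditional contribution is nonpositive; on $\{Y \notin C(X)\}$, boundedness again caps the difference by $M$. Combining these via the same law-of-total-expectation split yields $\E[\phi(Y)] \leq \E[\sup \phi(C(X))] + M\,\Err(C)$.

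There is no real obstacle here: the argument is elementary once one spots that conditioning on whether $Y$ lands in its predictive set cleanly partitions the gap into a "harmless" contribution (where the inf/sup bracket squeezes $\phi(Y)$) and a "bounded" contribution (controlled by $M$). The only subtleties worth flagging are (i) the tacit measurability of $\inf \phi(C(X))$ and $\sup \phi(C(X))$, which holds under mild regularity such as closed predictive sets and continuous $\phi$; and (ii) the essential role of the boundedness hypothesis $\phi(Y) \in [a,b]$, without which the conditional contribution on the miscoverage event could be arbitrarily large, as the subsequent remark in the paper already highlights.
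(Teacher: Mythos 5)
Your proposal is correct and follows exactly the same route as the paper's proof: a law-of-total-expectation split on the event $\{Y \in C(X)\}$, with the covered case contributing nonpositively and the miscovered case bounded by $M$, applied symmetrically to both tails. Nothing further is needed.
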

\begin{proof}
    We will show this in two parts:
    \begin{enumerate}[(i)]
        \item $\E[\inf \phi(C(X))] - M\,\Err(C) \leq \E[\phi(Y)]$, by showing that $\E[\inf \phi(C(X)) - \phi(Y)] \leq M\,\Err(C)$; \label{main-argument-lowerbound}
        \item $\E[\phi(Y)] \leq \E[\sup \phi(C(X))] + M\,\Err(C)$, by showing that $\E[\phi(Y) - \sup \phi(C(X))] \leq M\,\Err(C)$. \label{main-argument-upperbound}
    \end{enumerate}

    For (\ref{main-argument-lowerbound}), by the law of total expectation:
    \begin{align*}
        \E[\inf \phi(C(X)) - \phi(Y)]
        &= \E[\inf \phi(C(X)) - \phi(Y) | Y \in C(X)] \, \P[Y \in C(X)]
        \\ &\quad + \E[\inf \phi(C(X)) - \phi(Y) | Y \not\in C(X)] \, \P[Y \not\in C(X)];
    \end{align*}
    Now, given that $Y \in C(X)$, it must hold that $\phi(Y) \in \phi(C(X))$, and so $\inf \phi(C(X)) \leq \phi(Y)$; thus $\E[\inf \phi(C(X)) - \phi(Y) | Y \in C(X)] \leq 0$.
    Additionally, note that because both $\phi(C(X))$ and $\phi(Y)$ are bounded in $[a, b]$, it holds that $\inf \phi(C(X)) - \phi(Y) \leq b - a = M$ almost surely, and so $\E[\inf \phi(C(X)) - \phi(Y) | Y \not\in C(X)] \leq M$. Thus
    \[ \E[\inf \phi(C(X)) - \phi(Y)] \leq 0 + M\,\Err(C) = M\,\Err(C). \]

    The upper bound (\ref{main-argument-upperbound}) follows analogously: by the law of total expectation,
    \begin{align*}
        \E[\phi(Y) - \sup \phi(C(X))]
        &= \E[\phi(Y) - \sup \phi(C(X)) | Y \in C(X)] \, \P[Y \in C(X)]
        \\ &\quad + \E[\phi(Y) - \sup \phi(C(X)) | Y \not\in C(X)] \, \P[Y \not\in C(X)];
    \end{align*}
    Now, given that $Y \in C(X)$, it must hold that $\phi(Y) \in \phi(C(X))$, and so $\phi(Y) \leq \sup \phi(C(X))$; thus $\E[\phi(Y) - \sup \phi(C(X)) | Y \in C(X)] \leq 0$.
    Additionally, note that because both $\phi(C(X))$ and $\phi(Y)$ are bounded in $[a, b]$, it holds that $\phi(Y) - \sup \phi(C(X)) \leq b - a = M$ almost surely, and so $\E[\phi(Y) - \sup \phi(C(X)) | Y \not\in C(X)] \leq M$. Thus
    \[ \E[\phi(Y) - \sup \phi(C(X))] \leq 0 + M\,\Err(C) = M\,\Err(C), \]
    and we conclude.
\end{proof}

\begin{proposition}[Proposition~\ref{thm:mean-valid} in the main text]
    Under the conditions of Lemma~\ref{thm:main-argument}, for any $\alpha \in (0, 1)$, let $\widehat{C}^{(\E\phi)}_\alpha$ be as in Equation~3 from the main text. Then $\widehat{C}^{(\E\phi)}_\alpha$ is a valid $(1-\alpha)$-confidence interval for $\E[\phi(Y)]$, i.e.,
    \[ \P\left[ \E[\phi(Y)] \in \widehat{C}^{(\E\phi)}_\alpha \right] \geq 1 - \alpha. \]
\end{proposition}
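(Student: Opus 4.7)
The plan is to reduce the coverage claim to Lemma~\ref{thm:main-argument} combined with a union bound over the two one-sided confidence events. The key observation is that Lemma~\ref{thm:main-argument} yields a \emph{deterministic} sandwich
\[ \E[\inf \phi(C(X))] - M\,\Err(C) \;\leq\; \E[\phi(Y)] \;\leq\; \E[\sup \phi(C(X))] + M\,\Err(C), \]
so the only randomness in the coverage statement comes from $\widehat{L}^{(\E\phi)}_{\alpha/2}$ and $\widehat{U}^{(\E\phi)}_{\alpha/2}$ (the conformal predictor $C$ is treated as fixed, by hypothesis). Thus it suffices to show that, with probability at least $1-\alpha$, both one-sided confidence intervals simultaneously cover their respective targets.

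Concretely, I would define the two events
\[ A_L := \bigl\{ \widehat{L}^{(\E\phi)}_{\alpha/2} \leq \E[\inf \phi(C(X))] \bigr\}, \qquad A_U := \bigl\{ \E[\sup \phi(C(X))] \leq \widehat{U}^{(\E\phi)}_{\alpha/2} \bigr\}. \]
By the defining property of the two one-sided $(1-\alpha/2)$-confidence intervals we have $\P(A_L), \P(A_U) \geq 1-\alpha/2$, and hence by the union bound $\P(A_L \cap A_U) \geq 1-\alpha$. On $A_L \cap A_U$, chaining with Lemma~\ref{thm:main-argument} gives
\[ \widehat{L}^{(\E\phi)}_{\alpha/2} - M\,\Err(C) \;\leq\; \E[\inf \phi(C(X))] - M\,\Err(C) \;\leq\; \E[\phi(Y)] \]
and symmetrically $\E[\phi(Y)] \leq \widehat{U}^{(\E\phi)}_{\alpha/2} + M\,\Err(C)$, which together mean exactly that $\E[\phi(Y)] \in \widehat{C}^{(\E\phi)}_\alpha$. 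The conclusion follows.

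I do not expect a real obstacle here: Lemma~\ref{thm:main-argument} already absorbed the nontrivial work of handling the conformal predictor, and the rest is just the standard $\alpha/2 + \alpha/2$ split. The only mild subtlety worth flagging is that $\widehat{L}^{(\E\phi)}_{\alpha/2}$ and $\widehat{U}^{(\E\phi)}_{\alpha/2}$ are typically computed from the same unlabelled sample, so they are generally dependent; this is precisely why the union bound (rather than any sharper independence-based combination) is the natural tool.
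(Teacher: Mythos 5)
Your proof is correct and is essentially the paper's argument: both reduce the claim to a union bound over the two one-sided confidence events and then chain with Lemma~\ref{thm:main-argument}. The paper phrases it via the miscoverage event and rearranges inequalities before invoking the lemma, while you phrase it via the intersection of the two coverage events; these are the same argument up to De Morgan's law.
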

\begin{proof}
    \begin{align*}
        \P\left[ \E[\phi(Y)] \not\in \widehat{C}^{(\E\phi)}_\alpha \right]
        &= \P\left[ \widehat{L}^{(\E\phi)}_{\alpha/2} - M\,\Err(C) \not\leq \E[\phi(Y)] \ \text{or}\ \E[\phi(Y)] \not\leq \widehat{U}^{(\E\phi)}_{\alpha/2} + M\,\Err(C) \right]
        \\ &\leq \P\left[ \widehat{L}^{(\E\phi)}_{\alpha/2} - M\,\Err(C) \not\leq \E[\phi(Y)] \right] + \P\left[ \E[\phi(Y)] \not\leq \widehat{U}^{(\E\phi)}_{\alpha/2} + M\,\Err(C) \right]
        \\ &= \P\left[ \widehat{L}^{(\E\phi)}_{\alpha/2} \not\leq \E[\phi(Y)] + M\,\Err(C) \right] + \P\left[ \E[\phi(Y)] - M\,\Err(C) \not\leq \widehat{U}^{(\E\phi)}_{\alpha/2} \right]
        \\ &\leq \P\left[ \widehat{L}^{(\E\phi)}_{\alpha/2} \not\leq \E[\inf \phi(C(X))] \right] + \P\left[ \E[\sup \phi(C(X))] \not\leq \widehat{U}^{(\E\phi)}_{\alpha/2} \right],
    \end{align*}
    and, since $\widehat{L}^{(\E\phi)}_{\alpha/2}$ and $\widehat{U}^{(\E\phi)}_{\alpha/2}$ are one-sided confidence intervals, it follows that
    \begin{align*}
        & \P\left[ \widehat{L}^{(\E\phi)}_{\alpha/2} \not\leq \E[\inf \phi(C(X))] \right] + \P\left[ \E[\sup \phi(C(X))] \not\leq \widehat{U}^{(\E\phi)}_{\alpha/2} \right]
        \leq \frac{\alpha}{2} + \frac{\alpha}{2} = \alpha. \qedhere
    \end{align*}
\end{proof}

\begin{proposition}[Proposition~\ref{thm:mean-estimation-power} in the main text]
    It holds that
    \begin{align*}
        \leb \widehat{C}^{(\E\phi)}_\alpha
        &= \E[\leb \hull(\phi(C(X)))] + 2 M\,\Err(C)
        \\ &\quad + (\E[\inf \phi(C(X))] - \widehat{L}^{(\E\phi)}_{\alpha/2}) + (\widehat{U}^{(\E\phi)}_{\alpha/2} - \E[\sup \phi(C(X))]).
    \end{align*}
\end{proposition}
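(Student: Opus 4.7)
The plan is to view this as a purely algebraic decomposition of the length of the confidence interval into four recognizable pieces, so no heavy machinery is needed. First, I would directly expand $\leb \widehat{C}^{(\E\phi)}_\alpha$ using the definition from Equation~\ref{eq:interval-mean-estimation}: since $\widehat{C}^{(\E\phi)}_\alpha$ is the interval $[\widehat{L}^{(\E\phi)}_{\alpha/2} - M\,\Err(C),\ \widehat{U}^{(\E\phi)}_{\alpha/2} + M\,\Err(C)]$, its Lebesgue measure is
\[ \leb \widehat{C}^{(\E\phi)}_\alpha = \bigl(\widehat{U}^{(\E\phi)}_{\alpha/2} - \widehat{L}^{(\E\phi)}_{\alpha/2}\bigr) + 2M\,\Err(C). \]

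Next, I would telescope the gap $\widehat{U}^{(\E\phi)}_{\alpha/2} - \widehat{L}^{(\E\phi)}_{\alpha/2}$ by adding and subtracting $\E[\sup \phi(C(X))]$ and $\E[\inf \phi(C(X))]$, yielding
\[ \bigl(\widehat{U}^{(\E\phi)}_{\alpha/2} - \E[\sup \phi(C(X))]\bigr) + \bigl(\E[\sup \phi(C(X))] - \E[\inf \phi(C(X))]\bigr) + \bigl(\E[\inf \phi(C(X))] - \widehat{L}^{(\E\phi)}_{\alpha/2}\bigr). \]
The first and third summands already match the tightness terms appearing in the statement, so all that remains is to rewrite the middle summand.

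For that middle term I would invoke linearity of expectation, turning $\E[\sup \phi(C(X))] - \E[\inf \phi(C(X))]$ into $\E[\sup \phi(C(X)) - \inf \phi(C(X))]$, and then observe the elementary identity that for any bounded subset $A \subset \R$ the quantity $\sup A - \inf A$ equals $\leb \hull(A)$, since $\hull(A)$ is simply the closed interval $[\inf A, \sup A]$. Applied pointwise to $A = \phi(C(X))$ this produces $\E[\leb \hull(\phi(C(X)))]$, and substituting everything back gives the claimed identity.

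There is no real obstacle; the only subtlety worth noting is the implicit measurability of $\inf \phi(C(X))$, $\sup \phi(C(X))$ and $\leb \hull(\phi(C(X)))$ as random variables, which I would either assume as part of the setup on $C$ and $\phi$, or justify briefly under the standing boundedness assumption from Lemma~\ref{thm:main-argument} that makes all these quantities finite and well-defined almost surely.
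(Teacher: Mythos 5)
Your proposal is correct and follows essentially the same route as the paper's proof: expand the interval length from the definition, telescope the gap $\widehat{U}^{(\E\phi)}_{\alpha/2} - \widehat{L}^{(\E\phi)}_{\alpha/2}$ by inserting $\pm\E[\sup \phi(C(X))]$ and $\pm\E[\inf \phi(C(X))]$, and identify the middle term $\E[\sup \phi(C(X))] - \E[\inf \phi(C(X))]$ with $\E[\leb \hull(\phi(C(X)))]$. The only difference is cosmetic --- you make the ``$\sup A - \inf A = \leb\hull(A)$'' step explicit and flag measurability, while the paper leaves both implicit.
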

\begin{proof}
    \begin{align*}
        \leb \widehat{C}^{(\E\phi)}_\alpha
        &= \left( \widehat{U}^{(\E\phi)}_{\alpha/2} + M\,\Err(C) \right) - \left( \widehat{L}^{(\E\phi)}_{\alpha/2} - M\,\Err(C) \right)
        \\ &= \widehat{U}^{(\E\phi)}_{\alpha/2} - \widehat{L}^{(\E\phi)}_{\alpha/2} + 2M\,\Err(C)
        \\ &= \left( \E[\sup \phi(C(X))] + \widehat{U}^{(\E\phi)}_{\alpha/2} - \E[\sup \phi(C(X))] \right)
        \\ &\quad - \left( \E[\inf \phi(C(X))] + \widehat{L}^{(\E\phi)}_{\alpha/2} - \E[\inf \phi(C(X))] \right) + 2M\,\Err(C)
        \\ &= \left( \E[\sup \phi(C(X))] - \E[\inf \phi(C(X))] \right) + 2M\,\Err(C)
        \\ &\quad + \left( \widehat{U}^{(\E\phi)}_{\alpha/2} - \E[\sup \phi(C(X))] \right) - \left( \widehat{L}^{(\E\phi)}_{\alpha/2} - \E[\inf \phi(C(X))] \right)
        \\ &= \E[\leb \hull(\phi(C(X)))] + 2M\,\Err(C)
        \\ &\quad + \left( \widehat{U}^{(\E\phi)}_{\alpha/2} - \E[\sup \phi(C(X))] \right) + \left( \E[\inf \phi(C(X))] - \widehat{L}^{(\E\phi)}_{\alpha/2} \right). \qedhere
    \end{align*}
\end{proof}

\begin{proposition}[Proposition~\ref{thm:z-est-valid} in the main text]
    For any $\alpha \in (0, 1)$ let $\widehat{C}^{(\mathrm{Z}\psi)}_\alpha$ be as in Equation~4 from the main text. Then $\widehat{C}^{(\mathrm{Z}\psi)}_\alpha$ is a valid $(1-\alpha)$-confidence interval for $\theta^\star$, i.e.,
    \[ \P\left[\theta^\star \in \widehat{C}^{(\mathrm{Z}\psi)}_\alpha\right] \geq 1 - \alpha. \]
\end{proposition}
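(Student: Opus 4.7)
The plan is to mirror the argument already used in the proof of Proposition~\ref{thm:mean-valid}, replacing $\phi(Y)$ by $\psi(Y; \theta^\star)$ and using the defining property $\E[\psi(Y; \theta^\star)] = 0$ of the Z-estimator at the true parameter.

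First, I would apply Lemma~\ref{thm:main-argument} with $\phi(\cdot) := \psi(\cdot\,;\theta^\star)$ and $[a,b] := [a_{\theta^\star}, b_{\theta^\star}]$, so that $M = M_{\theta^\star}$. Combined with the Z-estimator identity $\E[\psi(Y;\theta^\star)] = 0$, this gives the deterministic sandwich
\[
  \E[\inf \psi(C(X);\theta^\star)] - M_{\theta^\star}\,\Err(C)
  \;\leq\; 0 \;\leq\;
  \E[\sup \psi(C(X);\theta^\star)] + M_{\theta^\star}\,\Err(C).
\]

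Next, I would decompose the failure event. By definition of $\widehat{C}^{(\mathrm{Z}\psi)}_\alpha$, the event $\{\theta^\star \notin \widehat{C}^{(\mathrm{Z}\psi)}_\alpha\}$ is the union of
\[
  A_L := \bigl\{\widehat{L}^{(\mathrm{Z}\psi)}_{\theta^\star,\alpha/2} - M_{\theta^\star}\,\Err(C) > 0\bigr\}
  \quad\text{and}\quad
  A_U := \bigl\{0 > \widehat{U}^{(\mathrm{Z}\psi)}_{\theta^\star,\alpha/2} + M_{\theta^\star}\,\Err(C)\bigr\}.
\]
On $A_L$, the sandwich above forces $\widehat{L}^{(\mathrm{Z}\psi)}_{\theta^\star,\alpha/2} > M_{\theta^\star}\,\Err(C) \geq \E[\inf \psi(C(X);\theta^\star)]$, contradicting the one-sided coverage property of $\widehat{L}^{(\mathrm{Z}\psi)}_{\theta^\star,\alpha/2}$; hence $\P[A_L] \leq \alpha/2$. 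Symmetrically, $A_U$ implies $\widehat{U}^{(\mathrm{Z}\psi)}_{\theta^\star,\alpha/2} < -M_{\theta^\star}\,\Err(C) \leq \E[\sup \psi(C(X);\theta^\star)]$, again contradicting the coverage of $\widehat{U}^{(\mathrm{Z}\psi)}_{\theta^\star,\alpha/2}$, so $\P[A_U] \leq \alpha/2$.

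A union bound then yields $\P[\theta^\star \notin \widehat{C}^{(\mathrm{Z}\psi)}_\alpha] \leq \alpha$, which is exactly the claim. I do not anticipate a real obstacle here: the whole content of the proposition is that the slack terms $M_{\theta^\star}\,\Err(C)$ in the definition of $\widehat{C}^{(\mathrm{Z}\psi)}_\alpha$ are precisely large enough to absorb the bias from imputing through the conformal set-predictor, so that miscoverage of $\theta^\star$ can only arise from miscoverage of one of the two auxiliary one-sided intervals for $\E[\inf \psi(C(X);\theta^\star)]$ or $\E[\sup \psi(C(X);\theta^\star)]$. The only minor care point is noting that the sandwich from Lemma~\ref{thm:main-argument} is deterministic (it is a statement about population quantities), so randomness in the failure event comes entirely from $\widehat{L}^{(\mathrm{Z}\psi)}_{\theta^\star,\alpha/2}$ and $\widehat{U}^{(\mathrm{Z}\psi)}_{\theta^\star,\alpha/2}$, which makes the union bound immediate.
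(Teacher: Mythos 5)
Your proof is correct and takes essentially the same approach as the paper's: apply Lemma~\ref{thm:main-argument} to $\psi(\cdot;\theta^\star)$ together with the identity $\E[\psi(Y;\theta^\star)]=0$, then union-bound the failure event against the two one-sided coverage guarantees. The only cosmetic difference is that you phrase the two sub-events as contradictions of coverage, whereas the paper chains the probability inequalities directly; the content is identical.
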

\begin{proof}
    \begin{align*}
        \P\left[\theta^\star \not\in \widehat{C}^{(\mathrm{Z}\psi)}_\alpha\right]
        &= \P\left[\widehat{L}^{(\mathrm{Z}\psi)}_{\theta^\star,\alpha/2} - M_{\theta^\star}\,\Err(C) \not\leq 0 \ \text{or}\ 0 \not\leq \widehat{U}^{(\mathrm{Z}\psi)}_{\theta^\star,\alpha/2} + M_{\theta^\star}\,\Err(C) \right]
        \\ &\leq \P\left[\widehat{L}^{(\mathrm{Z}\psi)}_{\theta^\star,\alpha/2} - M_{\theta^\star}\,\Err(C) \not\leq 0 \right] + \P\left[ 0 \not\leq \widehat{U}^{(\mathrm{Z}\psi)}_{\theta^\star,\alpha/2} + M_{\theta^\star}\,\Err(C) \right];
    \end{align*}
    Now, by definition $\E[\psi(Y; \theta^\star)] = 0$, and so the above is equivalent to
    \begin{align*}
        &\P\left[\widehat{L}^{(\mathrm{Z}\psi)}_{\theta^\star,\alpha/2} - M_{\theta^\star}\,\Err(C) \not\leq \E[\psi(Y; \theta^\star)] \right] + \P\left[ \E[\psi(Y; \theta^\star)] \not\leq \widehat{U}^{(\mathrm{Z}\psi)}_{\theta^\star,\alpha/2} + M_{\theta^\star}\,\Err(C) \right]
        \\ &= \P\left[\widehat{L}^{(\mathrm{Z}\psi)}_{\theta^\star,\alpha/2} \not\leq \E[\psi(Y; \theta^\star)] + M_{\theta^\star}\,\Err(C) \right] + \P\left[ \E[\psi(Y; \theta^\star)] - M_{\theta^\star}\,\Err(C) \not\leq \widehat{U}^{(\mathrm{Z}\psi)}_{\theta^\star,\alpha/2} \right]
        \\ &\leq \P\left[\widehat{L}^{(\mathrm{Z}\psi)}_{\theta^\star,\alpha/2} \not\leq \E[\inf \psi(C(X); \theta^\star)] \right] + \P\left[ \E[\sup \psi(C(X); \theta^\star)] \not\leq \widehat{U}^{(\mathrm{Z}\psi)}_{\theta^\star,\alpha/2} \right],
    \end{align*}
    and since the $\widehat{L}^{(\mathrm{Z}\psi)}_{\theta^\star,\alpha/2}$ and $\widehat{U}^{(\mathrm{Z}\psi)}_{\theta^\star,\alpha/2}$ are one-sided confidence intervals, it holds that
    \[ \P\left[\widehat{L}^{(\mathrm{Z}\psi)}_{\theta^\star,\alpha/2} \not\leq \E[\inf \psi(C(X); \theta^\star)] \right] + \P\left[ \E[\sup \psi(C(X); \theta^\star)] \not\leq \widehat{U}^{(\mathrm{Z}\psi)}_{\theta^\star,\alpha/2} \right] \leq \frac{\alpha}{2} + \frac{\alpha}{2} = \alpha. \]
\end{proof}

\begin{proposition}[Proposition~\ref{thm:z-estimation-power} in the main text]
    Consider $\Theta \subset \R$ bounded by $B$ (i.e., for all $\theta, \theta' \in \Theta$, $\|\theta - \theta'\| \leq B$). Suppose that $\widehat{L}^{(\mathrm{Z}\psi)}_{\theta,\alpha/2}$ and $\widehat{U}^{(\mathrm{Z}\psi)}_{\theta,\alpha/2}$ are both $K$-smooth in $\theta$ (i.e., differentiable w.r.t. $\theta$, with $K$-Lipschitz derivative), $\widehat{L}^{(\mathrm{Z}\psi)}_{\theta,\alpha/2} \leq \widehat{U}^{(\mathrm{Z}\psi)}_{\theta,\alpha/2}$ and $M_\theta \leq M$ for all $\theta$ and that $\frac{\dif}{\dif \theta} \widehat{L}^{(\mathrm{Z}\psi)}_{\theta^\star,\alpha/2}, \frac{\dif}{\dif \theta} \widehat{U}^{(\mathrm{Z}\psi)}_{\theta^\star,\alpha/2} \neq 0$. Then
    \begin{align*}
        \leb \widehat{C}^{(\mathrm{Z}\psi)}_\alpha
        &\leq \frac{1}{D_{\min}} \Biggl(
            \E[\leb \hull(\psi(C(X); \theta^\star))]
            + 2M\,\Err(C)
            \\ &\qquad\qquad\ \ + |\E[\inf \psi(C(X); \theta^\star)] - \widehat{L}^{(\mathrm{Z}\psi)}_{\theta^\star,\alpha/2}| + |\widehat{U}^{(\mathrm{Z}\psi)}_{\theta^\star,\alpha/2} - \E[\sup \psi(C(X); \theta^\star)]|
            \\ &\qquad\qquad\ \ + KB + \max \{ a_{\theta^\star}, b_{\theta^\star} \} \left\lvert 1 - D_{\min}/D_{\max} \right\rvert
        \Biggr),
    \end{align*}
    where $D_{\min} = \min \left\{ |\frac{\dif}{\dif \theta} \widehat{L}^{(\mathrm{Z}\psi)}_{\theta^\star,\alpha/2}|, |\frac{\dif}{\dif \theta} \widehat{U}^{(\mathrm{Z}\psi)}_{\theta^\star,\alpha/2}| \right\}$ and $D_{\max} = \max \left\{ |\frac{\dif}{\dif \theta} \widehat{L}^{(\mathrm{Z}\psi)}_{\theta^\star,\alpha/2}|, |\frac{\dif}{\dif \theta} \widehat{U}^{(\mathrm{Z}\psi)}_{\theta^\star,\alpha/2}| \right\}$.
\end{proposition}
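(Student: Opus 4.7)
The plan is to analyze $\widehat{C}^{(\mathrm{Z}\psi)}_\alpha$ via the boundary functions $g_L(\theta) := \widehat{L}^{(\mathrm{Z}\psi)}_{\theta,\alpha/2} - M_\theta\Err(C)$ and $g_U(\theta) := \widehat{U}^{(\mathrm{Z}\psi)}_{\theta,\alpha/2} + M_\theta\Err(C)$, so that $\widehat{C}^{(\mathrm{Z}\psi)}_\alpha = \{\theta \in \Theta : g_L(\theta) \leq 0 \leq g_U(\theta)\}$, and its Lebesgue measure is controlled by how far an admissible $\theta$ can deviate from $\theta^\star$.

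First I would use the $K$-Lipschitz-derivative hypothesis to first-order Taylor expand $g_L$ and $g_U$ at $\theta^\star$, with quadratic remainder bounded by $\tfrac{K}{2}(\theta-\theta^\star)^2 \leq \tfrac{KB}{2}|\theta-\theta^\star|$ uniformly on $\Theta$. Substituting into $g_L(\theta) \leq 0$ and $g_U(\theta) \geq 0$, and inverting using the nonvanishing $g_L'(\theta^\star), g_U'(\theta^\star)$, gives one-sided bounds on $\theta - \theta^\star$ roughly of the form $\theta - \theta^\star \leq |g_L(\theta^\star)|/|g_L'(\theta^\star)|$ and $-(\theta - \theta^\star) \leq |g_U(\theta^\star)|/|g_U'(\theta^\star)|$ (with signs determined by those of the derivatives), each augmented by an additive correction of order $KB/D_{\min}$ from absorbing the self-referential quadratic remainder via $|\theta-\theta^\star| \leq B$. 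Summing yields $\leb \widehat{C}^{(\mathrm{Z}\psi)}_\alpha \leq |g_L(\theta^\star)|/|g_L'(\theta^\star)| + |g_U(\theta^\star)|/|g_U'(\theta^\star)| + KB/D_{\min}$.

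Next I would factor out $1/D_{\min}$; the gap between $1/D_{\min}$ and $1/D_{\max}$ introduces a ``derivative-discrepancy'' correction of size at most $\max\{|g_L(\theta^\star)|, |g_U(\theta^\star)|\}\,|1 - D_{\min}/D_{\max}|/D_{\min}$, which I would bound by $\max\{a_{\theta^\star}, b_{\theta^\star}\}\,|1 - D_{\min}/D_{\max}|/D_{\min}$ using that $\widehat{L}, \widehat{U}$ approximate quantities valued in $[a_{\theta^\star}, b_{\theta^\star}]$ (with the $M_{\theta^\star}\Err(C)$ shift absorbed into the $2M\Err(C)$ term in the final decomposition).

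Finally, $\E[\psi(Y;\theta^\star)] = 0$ and Lemma~\ref{thm:main-argument} yield $g_L(\theta^\star) \leq 0 \leq g_U(\theta^\star)$ in the typical regime, so $|g_L(\theta^\star)| + |g_U(\theta^\star)| = g_U(\theta^\star) - g_L(\theta^\star) = \widehat{U}^{(\mathrm{Z}\psi)}_{\theta^\star,\alpha/2} - \widehat{L}^{(\mathrm{Z}\psi)}_{\theta^\star,\alpha/2} + 2M_{\theta^\star}\Err(C)$; telescoping through $\E[\inf \psi(C(X);\theta^\star)]$ and $\E[\sup \psi(C(X);\theta^\star)]$ decomposes this into $\E[\leb \hull(\psi(C(X);\theta^\star))]$, $2M\Err(C)$, $|\E[\inf \psi(C(X);\theta^\star)] - \widehat{L}^{(\mathrm{Z}\psi)}_{\theta^\star,\alpha/2}|$ and $|\widehat{U}^{(\mathrm{Z}\psi)}_{\theta^\star,\alpha/2} - \E[\sup \psi(C(X);\theta^\star)]|$, with the absolute values absorbing edge cases where the CIs overshoot. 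The principal obstacle is the self-referential quadratic Taylor remainder, which must be carefully absorbed via the diameter bound $B$ (or a one-step bootstrap) to obtain the clean $KB/D_{\min}$ contribution without losing the sharp structure of the main terms.
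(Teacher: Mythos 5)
Your sketch follows essentially the same route as the paper's proof: linearize $g_L(\theta)=\widehat{L}^{(\mathrm{Z}\psi)}_{\theta,\alpha/2}-M_\theta\Err(C)$ and $g_U(\theta)=\widehat{U}^{(\mathrm{Z}\psi)}_{\theta,\alpha/2}+M_\theta\Err(C)$ at $\theta^\star$ using $K$-smoothness with a remainder absorbed via the diameter $B$, solve the resulting linear constraints for the interval endpoints, factor out $1/D_{\min}$ and charge the $1/D_{\min}$ vs.\ $1/D_{\max}$ mismatch to a $\max\{|a_{\theta^\star}|,|b_{\theta^\star}|\}\,|1-D_{\min}/D_{\max}|$ term, and telescope through $\E[\inf\psi(C(X);\theta^\star)]$ and $\E[\sup\psi(C(X);\theta^\star)]$. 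The only cosmetic difference is that the paper keeps everything under absolute values and uses the triangle inequality rather than invoking the ``typical regime'' sign pattern $g_L(\theta^\star)\le 0\le g_U(\theta^\star)$, which makes the argument unconditional; the decomposition, the key linearization step, and the case split on which derivative realizes $D_{\min}$ all coincide.
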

\begin{proof}
    For convenience, let $u(\theta) = \widehat{U}^{(\mathrm{Z}\psi)}_{\theta,\alpha/2}$ and $\ell(\theta) = \widehat{L}^{(\mathrm{Z}\psi)}_{\theta,\alpha/2}$.

    We will do a first-order expansion around $\theta^\star$. Thanks to the $K$-smoothness assumption, it holds that, for all $\theta \in \Theta$,
    \begin{align}
        \begin{split}
            u(\theta) + M_\theta \Err(C) &\leq u(\theta) + M \Err(C) \leq u(\theta^\star) + M \Err(C) + u'(\theta^\star) (\theta - \theta^\star) + \frac{K}{2} \|\theta - \theta^\star\|^2
            \\
            &\leq u(\theta^\star) + M \Err(C) + u'(\theta^\star) (\theta - \theta^\star) + \frac{KB}{2};
        \end{split} \label{eq:ksmooth-upper}
        \\
        \begin{split}
            \ell(\theta) - M_\theta \Err(C) &\geq \ell(\theta) - M \Err(C) \geq \ell(\theta^\star) - M \Err(C) + \ell'(\theta^\star) (\theta - \theta^\star) - \frac{K}{2} \|\theta - \theta^\star\|^2
            \\
            &\geq \ell(\theta^\star) - M \Err(C) + \ell'(\theta^\star) (\theta - \theta^\star) - \frac{KB}{2}.
        \end{split} \label{eq:ksmooth-lower}
    \end{align}
    Consider then the set
    \begin{align*}
        S &:= \Biggl\{
            \theta \in \R
            : \ell(\theta^\star) - M \Err(C) + \ell'(\theta^\star) (\theta - \theta^\star) - \frac{KB}{2}
            \\ &\qquad\qquad\qquad \leq 0
            \leq u(\theta^\star) + M \Err(C) + u'(\theta^\star) (\theta - \theta^\star) + \frac{KB}{2} \Biggr\}.
    \end{align*}
    By Equations~\ref{eq:ksmooth-lower} and \ref{eq:ksmooth-upper}, it must hold that $\widehat{C}^{(\mathrm{Z}\psi)}_\alpha \subset S$, and thus $\leb \widehat{C}^{(\mathrm{Z}\psi)}_\alpha \leq \leb S$.

    $S$ has a much more amenable form thanks to the first-order expansion, which allows us to quantify its measure precisely.
    First, note that $S$ is a convex subset of $\R$, and thus an interval. So all that we need to do is to find its endpoints, which can be done by solving its constraints for their zeros (which, since the derivatives at $\theta^\star$ are not nil, must be unique).
    \begin{align*}
        & \ell(\theta^\star) - M \Err(C) + \ell'(\theta^\star) (\theta - \theta^\star) - \frac{KB}{2} = 0
        \\ & \iff \ell'(\theta^\star) (\theta - \theta^\star) = \frac{KB}{2} + M \Err(C) - \ell(\theta^\star)
        \\ & \iff \theta - \theta^\star = \frac{ KB/2 + M \Err(C) - \ell(\theta^\star) }{\ell'(\theta^\star) }
        \\ & \iff \theta = \theta^\star + \frac{ KB/2 + M \Err(C) - \ell(\theta^\star) }{\ell'(\theta^\star) };
    \end{align*}
    and
    \begin{align*}
        & u(\theta^\star) + M \Err(C) + u'(\theta^\star) (\theta - \theta^\star) + \frac{KB}{2} = 0
        \\ & \iff u'(\theta^\star) (\theta - \theta^\star) = -\frac{KB}{2} - M \Err(C) - u(\theta^\star)
        \\ & \iff \theta - \theta^\star = \frac{ -KB/2 - M \Err(C) - u(\theta^\star) }{u'(\theta^\star)}
        \\ & \iff \theta = \theta^\star + \frac{ -KB/2 - M \Err(C) - u(\theta^\star) }{u'(\theta^\star)}.
    \end{align*}
    Then:
    \begin{align*}
        \leb S &= \left| \left( \theta^\star + \frac{ KB/2 + M \Err(C) - \ell(\theta^\star) }{\ell'(\theta^\star) } \right) - \left( \theta^\star + \frac{ -KB/2 - M \Err(C) - u(\theta^\star) }{u'(\theta^\star)} \right) \right|
        \\ &= \left| \theta^\star + \frac{ KB/2 + M \Err(C) - \ell(\theta^\star) }{\ell'(\theta^\star) } - \theta^\star - \frac{ -KB/2 - M \Err(C) - u(\theta^\star) }{u'(\theta^\star)} \right|
        \\ &= \left| \frac{ KB/2 + M \Err(C) - \ell(\theta^\star) }{\ell'(\theta^\star) } - \frac{ -KB/2 - M \Err(C) - u(\theta^\star) }{u'(\theta^\star)} \right|
        \\ &= \left| \left( \frac{u(\theta^\star)}{u'(\theta^\star)} - \frac{\ell(\theta^\star)}{\ell'(\theta^\star)} \right) + \left( \frac{ KB/2 + M \Err(C) }{\ell'(\theta^\star) } + \frac{ KB/2 + M \Err(C) }{u'(\theta^\star)} \right) \right|
        \\ &\leq \left| \frac{u(\theta^\star)}{u'(\theta^\star)} - \frac{\ell(\theta^\star)}{\ell'(\theta^\star)} \right| + \left| \frac{ KB/2 + M \Err(C) }{\ell'(\theta^\star) } + \frac{ KB/2 + M \Err(C) }{u'(\theta^\star)} \right|
        \\ &= \left| \frac{u(\theta^\star)}{u'(\theta^\star)} - \frac{\ell(\theta^\star)}{\ell'(\theta^\star)} \right| + \left| \left( \frac{1}{\ell'(\theta^\star)} + \frac{1}{u'(\theta^\star)} \right) (KB/2 + M \Err(C)) \right|
        \\ &= \left| \frac{u(\theta^\star)}{u'(\theta^\star)} - \frac{\ell(\theta^\star)}{\ell'(\theta^\star)} \right| + \left| \frac{1}{\ell'(\theta^\star)} + \frac{1}{u'(\theta^\star)} \right| (KB/2 + M \Err(C)).
    \end{align*}
    Finally, by adding and subtracting the boundaries of the interval in expectation:
    \begin{align*}
        & \left| \frac{u(\theta^\star)}{u'(\theta^\star)} - \frac{\ell(\theta^\star)}{\ell'(\theta^\star)} \right| + \left| \frac{1}{\ell'(\theta^\star)} + \frac{1}{u'(\theta^\star)} \right| (KB/2 + M \Err(C))
        \\ &= \left| \frac{\E[\sup \psi(C(X); \theta^\star)]}{u'(\theta^\star)} + \frac{u(\theta^\star) - \E[\sup \psi(C(X); \theta^\star)]}{u'(\theta^\star)} - \frac{\E[\inf \psi(C(X); \theta^\star)]}{\ell'(\theta^\star)} - \frac{\ell(\theta^\star) - \E[\inf \psi(C(X); \theta^\star)]}{\ell'(\theta^\star)} \right|
        \\ &\quad + \left| \frac{1}{\ell'(\theta^\star)} + \frac{1}{u'(\theta^\star)} \right| (KB/2 + M \Err(C))
        \\ &= \left| \frac{\E[\sup \psi(C(X); \theta^\star)]}{u'(\theta^\star)} - \frac{\E[\inf \psi(C(X); \theta^\star)]}{\ell'(\theta^\star)} + \frac{u(\theta^\star) - \E[\sup \psi(C(X); \theta^\star)]}{u'(\theta^\star)} - \frac{\ell(\theta^\star) - \E[\inf \psi(C(X); \theta^\star)]}{\ell'(\theta^\star)} \right|
        \\ &\quad + \left| \frac{1}{\ell'(\theta^\star)} + \frac{1}{u'(\theta^\star)} \right| (KB/2 + M \Err(C))
    \end{align*}
    \begin{align*}
        &\leq \left| \frac{\E[\sup \psi(C(X); \theta^\star)]}{u'(\theta^\star)} - \frac{\E[\inf \psi(C(X); \theta^\star)]}{\ell'(\theta^\star)} \right|
        \\ &\quad + \left| \frac{u(\theta^\star) - \E[\sup \psi(C(X); \theta^\star)]}{u'(\theta^\star)} - \frac{\ell(\theta^\star) - \E[\inf \psi(C(X); \theta^\star)]}{\ell'(\theta^\star)} \right|
        \\ &\quad + \left| \frac{1}{\ell'(\theta^\star)} + \frac{1}{u'(\theta^\star)} \right| (KB/2 + M \Err(C)).
        \\ &\leq \left| \frac{\E[\sup \psi(C(X); \theta^\star)]}{u'(\theta^\star)} - \frac{\E[\inf \psi(C(X); \theta^\star)]}{\ell'(\theta^\star)} \right|
        \\ &\quad + \frac{|u(\theta^\star) - \E[\sup \psi(C(X); \theta^\star)]|}{|u'(\theta^\star)|} + \frac{|\E[\inf \psi(C(X); \theta^\star)] - \ell(\theta^\star)|}{|\ell'(\theta^\star)|}
        \\ &\quad + \left| \frac{1}{\ell'(\theta^\star)} + \frac{1}{u'(\theta^\star)} \right| (KB/2 + M \Err(C)).
    \end{align*}
    Now, since $G = \min \{ |u'(\theta^\star)|, |\ell'(\theta^\star)| \}$, it follows that:
    \begin{align*}
        \left| \frac{1}{\ell'(\theta^\star)} + \frac{1}{u'(\theta^\star)} \right| (KB/2 + M \Err(C))
        &\leq \left( \frac{1}{|\ell'(\theta^\star)|} + \frac{1}{|u'(\theta^\star)|} \right) (KB/2 + M \Err(C))
        \\ &\leq \frac{2}{G} (KB/2 + M \Err(C)) \leq \frac{1}{G} \left( KB + 2M \Err(C) \right);
    \end{align*}
    and
    \begin{align*}
        & \frac{|u(\theta^\star) - \E[\sup \psi(C(X); \theta^\star)]|}{|u'(\theta^\star)|} + \frac{|\E[\inf \psi(C(X); \theta^\star)] - \ell(\theta^\star)|}{|\ell'(\theta^\star)|}
        \\ &\leq \frac{|u(\theta^\star) - \E[\sup \psi(C(X); \theta^\star)]|}{G} + \frac{|\E[\inf \psi(C(X); \theta^\star)] - \ell(\theta^\star)|}{G}
        \\ &= \frac{1}{G} \bigl( |u(\theta^\star) - \E[\sup \psi(C(X); \theta^\star)]| + |\E[\inf \psi(C(X); \theta^\star)] - \ell(\theta^\star)| \bigr).
    \end{align*}
    Finally, we have to consider two cases:
    \begin{enumerate}[(i)]
        \item If $G = \min \{ |u'(\theta^\star)|, |\ell'(\theta^\star)| \} = |u'(\theta^\star)|$, then
            \begin{align*}
                & \left| \frac{\E[\sup \psi(C(X); \theta^\star)]}{u'(\theta^\star)} - \frac{\E[\inf \psi(C(X); \theta^\star)]}{\ell'(\theta^\star)} \right|
                = \frac{1}{G} \left| \E[\sup \psi(C(X); \theta^\star)] - \frac{u'(\theta^\star)}{\ell'(\theta^\star)} \E[\inf \psi(C(X); \theta^\star)] \right|
                \\ &= \frac{1}{G} \left| \E[\sup \psi(C(X); \theta^\star)] - \E[\inf \psi(C(X); \theta^\star)] + \E[\inf \psi(C(X); \theta^\star)] - \frac{u'(\theta^\star)}{\ell'(\theta^\star)} \E[\inf \psi(C(X); \theta^\star)] \right|
                \\ &\leq \frac{1}{G} \left| \E[\sup \psi(C(X); \theta^\star)] - \E[\inf \psi(C(X); \theta^\star)] \right| + \frac{1}{G} \left| \E[\inf \psi(C(X); \theta^\star)] - \frac{u'(\theta^\star)}{\ell'(\theta^\star)} \E[\inf \psi(C(X); \theta^\star)] \right|
                \\ &= \frac{1}{G} \left| \E[\sup \psi(C(X); \theta^\star)] - \E[\inf \psi(C(X); \theta^\star)] \right| + \frac{1}{G} |\E[\inf \psi(C(X); \theta^\star)]| \left| 1 - \frac{u'(\theta^\star)}{\ell'(\theta^\star)} \right|
                \\ &\leq \frac{1}{G} \left| \E[\sup \psi(C(X); \theta^\star)] - \E[\inf \psi(C(X); \theta^\star)] \right| + \frac{1}{G} \left| 1 - \frac{u'(\theta^\star)}{\ell'(\theta^\star)} \right| \max \{ |a_{\theta^\star}|, |b_{\theta^\star}| \}
                \\ &= \frac{1}{G} \left| \E[\leb \hull(\psi(C(X); \theta^\star))] \right| + \frac{1}{G} \left| 1 - \frac{u'(\theta^\star)}{\ell'(\theta^\star)} \right| \max \{ |a_{\theta^\star}|, |b_{\theta^\star}| \}
                \\ &= \frac{1}{G} \E[\leb \hull(\psi(C(X); \theta^\star))] + \frac{1}{G} \left| 1 - \frac{u'(\theta^\star)}{\ell'(\theta^\star)} \right| \max \{ |a_{\theta^\star}|, |b_{\theta^\star}| \}
                \\ &= \frac{1}{G} \E[\leb \hull(\psi(C(X); \theta^\star))] + \frac{1}{G} \left| 1 - \frac{\min \{ \ell'(\theta^\star), u'(\theta^\star) \}}{\max \{ \ell'(\theta^\star), u'(\theta^\star) \}} \right| \max \{ |a_{\theta^\star}|, |b_{\theta^\star}| \}.
            \end{align*}
        \item If $G = \min \{ |u'(\theta^\star)|, |\ell'(\theta^\star)| \} = |\ell'(\theta^\star)|$, then
            \begin{align*}
                & \left| \frac{\E[\sup \psi(C(X); \theta^\star)]}{u'(\theta^\star)} - \frac{\E[\inf \psi(C(X); \theta^\star)]}{\ell'(\theta^\star)} \right|
                = \frac{1}{G} \left| \frac{\ell'(\theta^\star)}{u'(\theta^\star)} \E[\sup \psi(C(X); \theta^\star)] - \E[\inf \psi(C(X); \theta^\star)] \right|
                \\ &= \frac{1}{G} \left| \E[\sup \psi(C(X); \theta^\star)] - \E[\inf \psi(C(X); \theta^\star)] + \frac{\ell'(\theta^\star)}{u'(\theta^\star)} \E[\sup \psi(C(X); \theta^\star)] - \E[\sup \psi(C(X); \theta^\star)] \right|
                \\ &\leq \frac{1}{G} \left| \E[\sup \psi(C(X); \theta^\star)] - \E[\inf \psi(C(X); \theta^\star)] \right| + \frac{1}{G} \left| \frac{\ell'(\theta^\star)}{u'(\theta^\star)} \E[\sup \psi(C(X); \theta^\star)] - \E[\sup \psi(C(X); \theta^\star)] \right|
                \\ &= \frac{1}{G} \left| \E[\sup \psi(C(X); \theta^\star)] - \E[\inf \psi(C(X); \theta^\star)] \right| + \frac{1}{G} |\E[\sup \psi(C(X); \theta^\star)]| \left| 1 - \frac{\ell'(\theta^\star)}{u'(\theta^\star)} \right|
                \\ &\leq \frac{1}{G} \left| \E[\sup \psi(C(X); \theta^\star)] - \E[\inf \psi(C(X); \theta^\star)] \right| + \frac{1}{G} \left| 1 - \frac{\ell'(\theta^\star)}{u'(\theta^\star)} \right| \max \{ |a_{\theta^\star}|, |b_{\theta^\star}| \}
                \\ &= \frac{1}{G} \left| \E[\leb \hull(\psi(C(X); \theta^\star))] \right| + \frac{1}{G} \left| 1 - \frac{\ell'(\theta^\star)}{u'(\theta^\star)} \right| \max \{ |a_{\theta^\star}|, |b_{\theta^\star}| \}
                \\ &= \frac{1}{G} \E[\leb \hull(\psi(C(X); \theta^\star))] + \frac{1}{G} \left| 1 - \frac{\ell'(\theta^\star)}{u'(\theta^\star)} \right| \max \{ |a_{\theta^\star}|, |b_{\theta^\star}| \}
                \\ &= \frac{1}{G} \E[\leb \hull(\psi(C(X); \theta^\star))] + \frac{1}{G} \left| 1 - \frac{\min \{ \ell'(\theta^\star), u'(\theta^\star) \}}{\max \{ \ell'(\theta^\star), u'(\theta^\star) \}} \right| \max \{ |a_{\theta^\star}|, |b_{\theta^\star}| \}.
            \end{align*}
    \end{enumerate}
    Combining everything, we get the desired bound.
\end{proof}

\begin{proposition}[Proposition~\ref{thm:m-est-valid} in the main text]
    For any $\alpha \in (0, 1)$, let $\widehat{C}^{(\mathrm{M}\ell)}_\alpha$ be as in Equation~5 from the main text. Then $\widehat{C}^{(\mathrm{M}\ell)}_\alpha$ is a valid confidence interval for $\theta^\star$, i.e.,
    \[ \P\left[\theta^\star \in \widehat{C}^{(\mathrm{M}\ell)}_\alpha\right] \geq 1 - \alpha. \]
\end{proposition}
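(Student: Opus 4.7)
The plan is to reduce the M-estimation problem to the Z-estimation setting already handled by Proposition~\ref{thm:z-est-valid}, and then invoke that result directly. The key observation is that once we have written first-order optimality as an estimating equation, the confidence set $\widehat{C}^{(\mathrm{M}\ell)}_\alpha$ is literally the Z-estimation confidence set for the derivative of $\ell$.

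Concretely, the first step is to note that since $\theta^\star$ minimizes $\E[\ell(Y;\theta)]$ and $\ell$ is (sub)differentiable in $\theta$, the first-order optimality condition gives
\[
\E\left[ \tfrac{\dif}{\dif\theta}\ell(Y;\theta^\star) \right] = 0.
\]
Defining $\psi(Y;\theta) := \tfrac{\dif}{\dif\theta}\ell(Y;\theta)$ turns this into the Z-estimation condition $\E[\psi(Y;\theta^\star)] = 0$, with the boundedness assumption $\ell'(Y;\theta) \in [a_\theta, b_\theta]$ playing the role of the boundedness assumption needed for $\psi$ in Section~\ref{sec:z-m-estimation}.

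The second step is to observe that $\widehat{L}^{(\mathrm{M}\ell)}_{\theta,\alpha/2}$ and $\widehat{U}^{(\mathrm{M}\ell)}_{\theta,\alpha/2}$ are by construction one-sided $(1-\alpha/2)$-confidence intervals for $\E[\inf \psi(C(X);\theta)]$ and $\E[\sup \psi(C(X);\theta)]$, so the set $\widehat{C}^{(\mathrm{M}\ell)}_\alpha$ in Equation~\ref{eq:confidence-interval-m-estimation} coincides pointwise with the Z-estimation set $\widehat{C}^{(\mathrm{Z}\psi)}_\alpha$ from Equation~\ref{eq:confidence-interval-z-estimation}. Applying Proposition~\ref{thm:z-est-valid} to this $\psi$ then yields $\P[\theta^\star \in \widehat{C}^{(\mathrm{M}\ell)}_\alpha] \geq 1 - \alpha$.

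There is essentially no obstacle beyond bookkeeping: the only delicate point is verifying that first-order optimality genuinely holds at $\theta^\star$. This is automatic when $\theta^\star$ is in the interior of $\Theta$ and $\ell$ is differentiable, and more generally follows from convexity of $\ell$ (which ensures $0 \in \E[\partial_\theta \ell(Y;\theta^\star)]$ and thus that the infimum/supremum-based bounds straddle zero in expectation). Once this is in place, the proof is an immediate reduction and could be written in just a few lines.
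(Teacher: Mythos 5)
Your proof is correct and follows essentially the same reduction as the paper: rewrite the first-order optimality condition $\E[\tfrac{\dif}{\dif\theta}\ell(Y;\theta^\star)]=0$ to cast M-estimation as Z-estimation with $\psi=\ell'$, then invoke Proposition~\ref{thm:z-est-valid}. The only cosmetic difference is that the paper explicitly justifies the derivative--expectation interchange via dominated convergence, whereas you state the first-order condition directly and add a helpful remark about interiority/convexity.
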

\begin{proof}
    Since the loss is differentiable, first note that it must hold that $\frac{\dif}{\dif \theta} \E[\ell(Y; \theta^\star)] = 0$.
    By the dominated convergence theorem we can exchange the expectation and derivative, and so
    \[ \frac{\dif}{\dif \theta} \E[\ell(Y; \theta^\star)] = \E\left[\frac{\dif}{\dif \theta} \ell(Y; \theta^\star)\right] = 0. \]
    Note that now our set $\widehat{C}^{(\mathrm{M}\ell)}_\alpha$ for M-estimation corresponds to a set $\widehat{C}^{(\mathrm{Z}\psi)}_\alpha$ for Z-estimation, for $\psi(Y; \theta) := \frac{\dif}{\dif \theta} \ell(Y; \theta^\star)$. So by applying Proposition~\ref{thm:z-est-valid} (Proposition 2.5 in the main text), we conclude.
\end{proof}

\begin{proposition}[Power analysis for M-estimation]\label{thm:m-estimation-power}
    Consider $\Theta \subset \R$ bounded by $B$ (i.e., for all $\theta, \theta' \in \Theta$, $\|\theta - \theta'\| \leq B$). Suppose that $\widehat{L}^{(\mathrm{M}\ell)}_{\theta,\alpha/2}$ and $\widehat{U}^{(\mathrm{M}\ell)}_{\theta,\alpha/2}$ are both $K$-smooth in $\theta$ (i.e., differentiable w.r.t. $\theta$, with $K$-Lipschitz derivative), $\widehat{L}^{(\mathrm{M}\ell)}_{\theta,\alpha/2} \leq \widehat{U}^{(\mathrm{M}\ell)}_{\theta,\alpha/2}$, $M_\theta \leq M$ for all $\theta$ and that $\frac{\dif}{\dif \theta} \widehat{L}^{(\mathrm{M}\ell)}_{\theta^\star,\alpha/2}, \frac{\dif}{\dif \theta} \widehat{U}^{(\mathrm{M}\ell)}_{\theta^\star,\alpha/2} \neq 0$. Then
    \begin{align*}
        \leb \widehat{C}^{(\mathrm{M}\ell)}_\alpha
        &\leq \frac{1}{H_{\min}} \Biggl(
            \E[\leb \hull(\frac{\dif}{\dif \theta} \ell(C(X); \theta^\star))]
            + 2M\,\Err(C)
            \\ &\qquad\qquad\ \ + |\E[\inf \frac{\dif}{\dif \theta} \ell(C(X); \theta^\star)] - \widehat{L}^{(\mathrm{M}\ell)}_{\theta^\star,\alpha/2}| + |\widehat{U}^{(\mathrm{M}\ell)}_{\theta^\star,\alpha/2} - \E[\sup \frac{\dif}{\dif \theta} \ell(C(X); \theta^\star)]|
            \\ &\qquad\qquad\ \ + KB + \max \{ a_{\theta^\star}, b_{\theta^\star} \} \left\lvert 1 - H_{\min}/H_{\max} \right\rvert
        \Biggr),
    \end{align*}
    where $H_{\min} = \min \left\{ |\frac{\dif}{\dif \theta} \widehat{L}^{(\mathrm{M}\ell)}_{\theta^\star,\alpha/2}|, |\frac{\dif}{\dif \theta} \widehat{U}^{(\mathrm{M}\ell)}_{\theta^\star,\alpha/2}| \right\}$ and $H_{\max} = \max \left\{ |\frac{\dif}{\dif \theta} \widehat{L}^{(\mathrm{M}\ell)}_{\theta^\star,\alpha/2}|, |\frac{\dif}{\dif \theta} \widehat{U}^{(\mathrm{M}\ell)}_{\theta^\star,\alpha/2}| \right\}$.
\end{proposition}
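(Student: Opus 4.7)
\begin{proofsketch}
The plan is to reduce directly to Proposition~\ref{thm:z-estimation-power} by viewing M-estimation as a particular Z-estimation problem, just as was done in the proof of Proposition~\ref{thm:m-est-valid}. Specifically, I set $\psi(Y; \theta) := \frac{\dif}{\dif \theta} \ell(Y; \theta)$; by the dominated convergence theorem and the optimality of $\theta^\star$ for the expected loss, we have $\E[\psi(Y; \theta^\star)] = 0$, so $\theta^\star$ is the Z-estimand corresponding to $\psi$. Under this identification, the set $\widehat{C}^{(\mathrm{M}\ell)}_\alpha$ defined in Equation~\ref{eq:confidence-interval-m-estimation} coincides exactly with the set $\widehat{C}^{(\mathrm{Z}\psi)}_\alpha$ from Equation~\ref{eq:confidence-interval-z-estimation}, with $\widehat{L}^{(\mathrm{Z}\psi)}_{\theta,\alpha/2} = \widehat{L}^{(\mathrm{M}\ell)}_{\theta,\alpha/2}$ and $\widehat{U}^{(\mathrm{Z}\psi)}_{\theta,\alpha/2} = \widehat{U}^{(\mathrm{M}\ell)}_{\theta,\alpha/2}$.

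Next, I verify that the hypotheses of Proposition~\ref{thm:z-estimation-power} all hold for this choice of $\psi$: $\Theta$ is bounded by $B$, the one-sided CIs are $K$-smooth in $\theta$ with $\widehat{L}^{(\mathrm{M}\ell)}_{\theta,\alpha/2} \leq \widehat{U}^{(\mathrm{M}\ell)}_{\theta,\alpha/2}$, the bound $M_\theta \leq M$ is assumed (and the image of $\psi(Y;\theta) = \frac{\dif}{\dif\theta}\ell(Y;\theta)$ is assumed to lie in $[a_\theta, b_\theta]$ almost surely as stated in Section~\ref{sec:z-m-estimation}), and the derivatives of the one-sided CIs at $\theta^\star$ are nonzero. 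The quantities $D_{\min}, D_{\max}$ from Proposition~\ref{thm:z-estimation-power} become, under this identification, exactly $H_{\min}$ and $H_{\max}$.

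Applying Proposition~\ref{thm:z-estimation-power} therefore yields
\begin{align*}
    \leb \widehat{C}^{(\mathrm{M}\ell)}_\alpha
    &\leq \frac{1}{H_{\min}} \Biggl(
        \E[\leb \hull(\tfrac{\dif}{\dif \theta} \ell(C(X); \theta^\star))]
        + 2M\,\Err(C)
        \\ &\qquad\qquad\ \ + |\E[\inf \tfrac{\dif}{\dif \theta} \ell(C(X); \theta^\star)] - \widehat{L}^{(\mathrm{M}\ell)}_{\theta^\star,\alpha/2}|
        + |\widehat{U}^{(\mathrm{M}\ell)}_{\theta^\star,\alpha/2} - \E[\sup \tfrac{\dif}{\dif \theta} \ell(C(X); \theta^\star)]|
        \\ &\qquad\qquad\ \ + KB + \max \{ a_{\theta^\star}, b_{\theta^\star} \} \left\lvert 1 - H_{\min}/H_{\max} \right\rvert
    \Biggr),
\end{align*}
which is precisely the desired bound. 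Since the heavy lifting is done by Proposition~\ref{thm:z-estimation-power}, the only real task is the bookkeeping of the reduction; there is no new obstacle here, because the M-estimation setting is already set up in Section~\ref{sec:z-m-estimation} to mirror the Z-estimation setting via the derivative of $\ell$.
\end{proofsketch}
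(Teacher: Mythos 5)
Your proof is correct and matches the paper's own argument: both reduce the M-estimation bound to the Z-estimation power bound (Proposition~\ref{thm:z-estimation-power}) via the identification $\psi(Y;\theta) = \frac{\dif}{\dif\theta}\ell(Y;\theta)$, exactly as in the proof of Proposition~\ref{thm:m-est-valid}. The additional bookkeeping you provide (verifying the hypotheses and the identification $D_{\min},D_{\max} \leftrightarrow H_{\min},H_{\max}$) is a harmless elaboration of what the paper leaves implicit.
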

\begin{proof}
    As in Proposition~\ref{thm:m-est-valid} (Proposition 2.7 in the main text), we can convert the M-estimation CI to a Z-estimation one. This result then follows by just applying Proposition~\ref{thm:z-estimation-power} (Proposition 2.6 in the main text).
\end{proof}

\begin{proposition}[Proposition~\ref{thm:evalue-valid-supermartingale} in the main text]
    If $(E_0, E_1, \ldots)$ is a test supermartingale for the null $H_0$, then so is
    the sequence of conformal prediction-powered e-values $(E^{\mathrm{ppi}-(C)}_0, E^{\mathrm{ppi}-(C)}_1, \ldots)$ defined in Equation~6 from the main text.
\end{proposition}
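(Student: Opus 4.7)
The plan is to verify the three defining properties of a test supermartingale with respect to the natural filtration $(\mathcal{F}_n)$, under which the data observed up to time $n$ as well as the predictable sequences $(C_i)_{i \leq n+1}$ and $(\eta_i)_{i \leq n+1}$ are measurable: the process starts at $1$ (immediate from the empty-product convention), is nonnegative, and satisfies $\E[E^{\mathrm{ppi}-(C)}_n \mid \mathcal{F}_{n-1}] \leq E^{\mathrm{ppi}-(C)}_{n-1}$ under the null $H_0$. For nonnegativity, I would use the stated upper bound on $\eta_i$: since $e_i$ takes values in $[a_i, b_i]$, the argument passed to $\rescale_{\eta_i}$ is bounded below by $a_i - (b_i - a_i)\Err(C_i)$, and the constraint on $\eta_i$ is precisely what is needed so that $\rescale_{\eta_i}(\cdot) = 1 + \eta_i(\cdot - 1) \geq 0$ on this range; each factor of the product is then nonnegative.

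The central step is the supermartingale property. Writing $E^{\mathrm{ppi}-(C)}_n = E^{\mathrm{ppi}-(C)}_{n-1} \cdot \rescale_{\eta_n}\bigl( \inf e_n(C_n(X_n)) - (b_n - a_n)\Err(C_n) \bigr)$, the first factor together with $\eta_n$, $C_n$, $a_n$, $b_n$ and $\Err(C_n)$ are $\mathcal{F}_{n-1}$-measurable by predictability, so
\[ \E[E^{\mathrm{ppi}-(C)}_n \mid \mathcal{F}_{n-1}] = E^{\mathrm{ppi}-(C)}_{n-1} \Bigl( 1 + \eta_n \bigl( \E[\inf e_n(C_n(X_n)) \mid \mathcal{F}_{n-1}] - (b_n - a_n)\Err(C_n) - 1 \bigr) \Bigr). \]
I would then invoke Lemma~\ref{thm:main-argument} conditionally on $\mathcal{F}_{n-1}$, with $\phi = e_n$ (bounded in $[a_n, b_n]$), to get $\E[\inf e_n(C_n(X_n)) \mid \mathcal{F}_{n-1}] - (b_n - a_n)\Err(C_n) \leq \E[e_n(Y_n) \mid \mathcal{F}_{n-1}] \leq 1$, where the last inequality is the test supermartingale property of the original $(E_n)$ under $H_0$. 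Combining this with $\eta_n \geq 0$ and $E^{\mathrm{ppi}-(C)}_{n-1} \geq 0$ (just established) yields the desired bound.

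The main obstacle is justifying the conditional use of Lemma~\ref{thm:main-argument}, since that lemma is stated marginally. Its proof is essentially a pointwise argument that conditions on $\{Y \in C(X)\}$ and uses boundedness on its complement, so the same steps go through verbatim when every probability and expectation is read as $\mathcal{F}_{n-1}$-conditional, provided $\Err(C_n)$ is interpreted as $\P[Y_n \not\in C_n(X_n) \mid \mathcal{F}_{n-1}]$. This identification is legitimate because $C_n$ is $\mathcal{F}_{n-1}$-measurable and $(X_n, Y_n) \sim P$ is independent of $\mathcal{F}_{n-1}$ by the i.i.d. assumption, so the conditional miscoverage coincides with the marginal $\Err(C_n)$ used in the construction.
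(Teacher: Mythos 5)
Your proof is correct and follows essentially the same route as the paper's: factor out $E^{\mathrm{ppi}-(C)}_{n-1}$ by predictability, expand $\rescale_{\eta_n}$, apply Lemma~\ref{thm:main-argument} to pass from the conformal imputation to $\E[e_n(Y_n) \mid \mathcal{F}_{n-1}]$, and invoke the test-supermartingale property of the original $(E_n)$ under $H_0$. You add a justification the paper leaves implicit—namely that Lemma~\ref{thm:main-argument} applies conditionally on $\mathcal{F}_{n-1}$ because $C_n$ is predictable and $(X_n,Y_n)$ is independent of the past, so the conditional miscoverage equals $\Err(C_n)$—which is a welcome clarification but not a different argument.
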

\begin{proof}
    The sequence is guaranteed to be nonnegative due to the bounds on $\eta_i$, and starts at $E^{\mathrm{ppi}-(C)}_0 = 1$ by definition. So all that remains is to show that it is a supermartingale. For any point in time $i$, it follows:
    \begin{align*}
        \E[ E^{\mathrm{ppi}-(C)}_{i} | \filt_{i-1} ]
        &= \E\left[ E^{\mathrm{ppi}-(C)}_{i-1} \cdot \rescale_{\eta_i}\Bigl( \inf e_i(C_i(X_i)) - (b_i - a_i) \Err(C_i) \Bigr) | \filt_{t-1} \right]
        \\ &= E^{\mathrm{ppi}-(C)}_{i-1} \cdot \E\left[ \rescale_{\eta_i}\Bigl( \inf e_i(C_i(X_i)) - (b_i - a_i) \Err(C_i) \Bigr) | \filt_{t-1} \right]
        \\ &= E^{\mathrm{ppi}-(C)}_{i-1} \cdot \left( 1 + \eta_i \left( \E\left[ \inf e_i(C_i(X_i)) - (b_i - a_i) \Err(C_i) | \filt_{t-1} \right] - 1 \right) \right).
    \end{align*}
    Now, by Lemma~\ref{thm:main-argument} (Lemma 2.1 in the main text),
    \begin{align*}
        & E^{\mathrm{ppi}-(C)}_{i-1} \cdot \left( 1 + \eta_i \left( \E\left[ \inf e_i(C_i(X_i)) - (b_i - a_i) \Err(C_i) | \filt_{t-1} \right] - 1 \right) \right)
        \\ &\leq E^{\mathrm{ppi}-(C)}_{i-1} \cdot \left( 1 + \eta_i \left( \E\left[ e_i(Y_i) | \filt_{t-1} \right] - 1 \right) \right)
        \\ &\leq E^{\mathrm{ppi}-(C)}_{i-1} \cdot \left( 1 + \eta_i \left( 1 - 1 \right) \right)
        = E^{\mathrm{ppi}-(C)}_{i-1},
    \end{align*}
    where the last step follows under the null since the original e-values form a test supermartingale.
\end{proof}

\begin{proposition}[Proposition~\ref{thm:prop-power-evalue} in the main text]
    If $e_i(\cdot) \in [a_i, b_i]$ for every $i$, then there exists some constant $r>0$ independent of $n$ for which
    \begin{align*}
        \E\left[ \frac{1}{n} \log E^{\mathrm{ppi}-(C)}_n \right]
        &\geq \E\left[ \frac{1}{n} \log E_n \right] - \frac{1}{n} \sum_{i=1}^n \E[\leb \hull(\log e_i(C_i(X_i)))]
        \\ &\quad - \frac{r}{n} \sum_{i=1}^n \E\left[ h_i(\eta_i) \Err(C_i) \right] - \frac{r}{n} \sum_{i=1}^n \E\bigl[ |1 - \eta_i| \left|\inf e_i(C_i(X_i)) - 1\right| \bigr],
    \end{align*}
    where $h_i(\eta_i) = \log \frac{b_i}{a_i} + \eta_i (b_i - a_i)$, which is increasing in $\eta_i$.
\end{proposition}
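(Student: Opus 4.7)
The plan is to reduce the global inequality to a per-step one by taking logarithms: $\log E^{\mathrm{ppi}-(C)}_n = \sum_{i=1}^n \log R_i$ and $\log E_n = \sum_{i=1}^n \log e_i(Y_i)$, where for convenience I write $u_i := \inf e_i(C_i(X_i))$ and $R_i := \rescale_{\eta_i}(u_i - (b_i - a_i)\Err(C_i))$. Thus the claim follows as soon as I can show, for each $i$,
\[
\E[\log e_i(Y_i) - \log R_i] \leq \E[\leb \hull(\log e_i(C_i(X_i)))] + r\,\E[h_i(\eta_i) \Err(C_i)] + r\,\E[|1 - \eta_i|\,|u_i - 1|],
\]
and then sum over $i$ and divide by $n$.

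To prove the per-step bound I would telescope through two intermediate quantities:
\[
\log e_i(Y_i) - \log R_i
= \underbrace{\bigl(\log e_i(Y_i) - \log u_i\bigr)}_{T_1}
+ \underbrace{\bigl(\log u_i - \log \rescale_{\eta_i}(u_i)\bigr)}_{T_2}
+ \underbrace{\bigl(\log \rescale_{\eta_i}(u_i) - \log R_i\bigr)}_{T_3}.
\]
For $T_1$, I apply the same coverage argument used in Lemma~\ref{thm:main-argument}, now to $\log e_i$: conditionally on $Y_i \in C_i(X_i)$ we get $T_1 \leq \log \sup e_i(C_i(X_i)) - \log u_i = \leb \hull(\log e_i(C_i(X_i)))$, while on the complement boundedness gives $T_1 \leq \log(b_i/a_i)$. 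By the law of total expectation, $\E[T_1] \leq \E[\leb \hull(\log e_i(C_i(X_i)))] + \log(b_i/a_i)\,\Err(C_i)$. For $T_2$ and $T_3$, I use the elementary inequality $\log(1+z) \leq z$ (valid since $u_i \geq a_i > 0$ guarantees the arguments are positive): since $u_i - \rescale_{\eta_i}(u_i) = (1-\eta_i)(u_i - 1)$, one gets $T_2 \leq (1-\eta_i)(u_i-1)/\rescale_{\eta_i}(u_i) \leq |1-\eta_i|\,|u_i - 1|/\rescale_{\eta_i}(u_i)$; and since $\rescale_{\eta_i}(u_i) - R_i = \eta_i(b_i - a_i)\Err(C_i)$, one gets $T_3 \leq \eta_i(b_i - a_i)\Err(C_i)/R_i$. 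Summing the three pieces and recognizing $\log(b_i/a_i) + \eta_i(b_i - a_i) = h_i(\eta_i)$ collects the $\Err(C_i)$-proportional contributions into the required $r\,h_i(\eta_i)\Err(C_i)$ term.

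The main obstacle is identifying the uniform constant $r$: the bounds for $T_2$ and $T_3$ introduce the denominators $\rescale_{\eta_i}(u_i)$ and $R_i$, and we need both to be uniformly bounded away from zero across all $i$ so that $r$ can be chosen independently of $n$ (and additionally with $r \geq 1$, so as to also absorb the bare $\log(b_i/a_i)$ factor inherited from $T_1$ into the $r\,h_i(\eta_i)\Err(C_i)$ term). This is precisely what the boundedness assumption $e_i(\cdot) \in [a_i, b_i]$ together with the stated constraint $0 \leq \eta_i \leq (1 - a_i - (b_i-a_i)\Err(C_i))^{-1}$ is designed to ensure: a short algebraic verification using these two constraints yields uniform positive lower bounds on both $\rescale_{\eta_i}(u_i)$ and $R_i$, from which the existence of such an $r$ follows.
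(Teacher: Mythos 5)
Your decomposition into $T_1, T_2, T_3$ and the per-step coverage-plus-linearization strategy matches the paper's proof in substance: the paper bounds $T_2 + T_3 = \log u_i - \log R_i$ in one step via the Lipschitz constant of $\log$ on the relevant interval followed by a triangle inequality, whereas you telescope through $\log \rescale_{\eta_i}(u_i)$ and apply $\log(1+z)\le z$ twice, but the resulting terms and the role of the coverage argument for $T_1$ are identical. One caveat worth keeping in mind, which applies to the paper's own proof as well as yours: the stated constraint $0 \leq \eta_i \leq (1 - a_i - (b_i-a_i)\Err(C_i))^{-1}$ only guarantees $\rescale_{\eta_i}(a_i-(b_i-a_i)\Err(C_i)) \ge 0$, not a uniform strictly positive lower bound, so the "short algebraic verification" that a finite $r$ independent of $n$ exists actually requires $\eta_i$ to stay uniformly bounded away from the endpoint of that interval (or some equivalent additional hypothesis); the paper silently sets $r = \max\{(\rescale_{\eta_i}(a_i-(b_i-a_i)\Err(C_i)))^{-1},1\}$ without establishing this.
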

\begin{proof}
    Let $\tau_i := \rescale_{\eta_i}(a_i - (b_i - a_i)\,\Err(C_i))$.
    First, note that $\log$ is $\frac{1}{\rescale_{\eta_i}(a_i - (b_i - a_i)\,\Err(C_i))}$-Lipschitz in $[\rescale_{\eta_i}(a_i - (b_i - a_i)\,\Err(C_i)), \rescale_{\eta_i}(b_i - (b_i - a_i)\,\Err(C_i))]$, and thus:
    \begin{align*}
        &\log \rescale_{\eta_i} \Bigl( \inf e_i(C_i(X_i)) - (b_i - a_i)\,\Err(C_i) \Bigr)
        \\ &\geq \log \inf e_i(C_i(X_i)) - \frac{\left| \rescale_{\eta_i} \Bigl( \inf e_i(C_i(X_i)) - (b_i - a_i)\,\Err(C_i) \Bigr) - \inf e_i(C_i(X_i)) \right|}{\rescale_{\eta_i}(a_i - (b_i - a_i)\,\Err(C_i))}
    \end{align*}
    and, by adding and subtracting $\rescale_{\eta_i} \bigl( \inf e_i(C_i(X_i)) \bigr)$ and then invoking the triangular inequality, we get
    \begin{align*}
        & \log \inf e_i(C_i(X_i)) - \frac{\left| \rescale_{\eta_i} \Bigl( \inf e_i(C_i(X_i)) - (b_i - a_i)\,\Err(C_i) \Bigr) - \inf e_i(C_i(X_i)) \right|}{\rescale_{\eta_i}(a_i - (b_i - a_i)\,\Err(C_i))}
        \\ &\geq \log \inf e_i(C_i(X_i)) - \frac{\left| \rescale_{\eta_i} \Bigl( \inf e_i(C_i(X_i)) - (b_i - a_i)\,\Err(C_i) \Bigr) - \rescale_{\eta_i} \Bigl( \inf e_i(C_i(X_i)) \Bigr) \right|}{\rescale_{\eta_i}(a_i - (b_i - a_i)\,\Err(C_i))} \\ &\qquad\qquad\qquad\qquad\qquad - \frac{\left| \inf e_i(C_i(X_i)) - \rescale_{\eta_i} \Bigl( \inf e_i(C_i(X_i)) \Bigr) \right|}{\rescale_{\eta_i}(a_i - (b_i - a_i)\,\Err(C_i))}
        \\ &= \log \inf e_i(C_i(X_i)) - \frac{\left| \eta_i \Bigl( \Bigl( \inf e_i(C_i(X_i)) - (b_i - a_i)\,\Err(C_i) \Bigr) - \inf e_i(C_i(X_i)) \Bigr) \right|}{\rescale_{\eta_i}(a_i - (b_i - a_i)\,\Err(C_i))}
        \\ &\qquad\qquad\qquad\qquad\qquad - \frac{\left| \inf e_i(C_i(X_i)) - \rescale_{\eta_i} \Bigl( \inf e_i(C_i(X_i)) \Bigr) \right|}{\rescale_{\eta_i}(a_i - (b_i - a_i)\,\Err(C_i))}
    \end{align*}
    \begin{align*}
        &= \log \inf e_i(C_i(X_i)) - \frac{\left| \eta_i (b_i - a_i)\,\Err(C_i) \right| + \left| \inf e_i(C_i(X_i)) - \rescale_{\eta_i} \Bigl( \inf e_i(C_i(X_i)) \Bigr) \right|}{\rescale_{\eta_i}(a_i - (b_i - a_i)\,\Err(C_i))}
        \\ &= \log \inf e_i(C_i(X_i)) - \frac{\eta_i (b_i - a_i)\,\Err(C_i) + \left| \inf e_i(C_i(X_i)) - \rescale_{\eta_i} \Bigl( \inf e_i(C_i(X_i)) \Bigr) \right|}{\rescale_{\eta_i}(a_i - (b_i - a_i)\,\Err(C_i))}
        \\ &= \log \inf e_i(C_i(X_i)) - \frac{\eta_i (b_i - a_i)\,\Err(C_i) + \left| \inf e_i(C_i(X_i)) - 1 - \eta_i \Bigl( \inf e_i(C_i(X_i)) - 1 \Bigr) \right|}{\rescale_{\eta_i}(a_i - (b_i - a_i)\,\Err(C_i))}
        \\ &= \log \inf e_i(C_i(X_i)) - \frac{\eta_i (b_i - a_i)\,\Err(C_i) + \left| \Bigl(\inf e_i(C_i(X_i)) - 1\Bigr) - \eta_i \Bigl( \inf e_i(C_i(X_i)) - 1 \Bigr) \right|}{\rescale_{\eta_i}(a_i - (b_i - a_i)\,\Err(C_i))}
        \\ &= \log \inf e_i(C_i(X_i)) - \frac{\eta_i (b_i - a_i)\,\Err(C_i) + \left| (1 - \eta_i) \Bigl( \inf e_i(C_i(X_i)) - 1 \Bigr) \right|}{\rescale_{\eta_i}(a_i - (b_i - a_i)\,\Err(C_i))}
        \\ &= \log \inf e_i(C_i(X_i)) - \frac{\eta_i (b_i - a_i)\,\Err(C_i) + |1 - \eta_i| \left| \inf e_i(C_i(X_i)) - 1 \right|}{\rescale_{\eta_i}(a_i - (b_i - a_i)\,\Err(C_i))}.
    \end{align*}
    It thus follows:
    \begin{align*}
        & \E\left[ \frac{1}{n} \log E^{\mathrm{ppi}-(C)}_n \right]
        \\ &= \E\left[ \frac{1}{n} \log \prod_{i=1}^n \rescale_{\eta_i} \bigl( \inf e_i(C_i(X_i)) - (b_i - a_i) \Err(C_i) \bigr) \right]
        \\ &\geq \E\left[ \frac{1}{n} \sum_{i=1}^n \left[ \log \inf e_i(C_i(X_i)) - \frac{ \eta_i (b_i - a_i) \Err(C_i) + |1 - \eta_i| \left|\inf e_i(C_i(X_i)) - 1\right| }{\rescale_{\eta_i}( a_i - (b_i - a_i) \Err(C_i) )} \right] \right]
        \\ &= \E\left[ \frac{1}{n} \sum_{i=1}^n \log \inf e_i(C_i(X_i)) - \frac{1}{n} \sum_{i=1}^n \frac{ \eta_i (b_i - a_i) \Err(C_i) + |1 - \eta_i| \left|\inf e_i(C_i(X_i)) - 1\right| }{\rescale_{\eta_i}( a_i - (b_i - a_i) \Err(C_i) )} \right]
        \\ &= \frac{1}{n} \sum_{i=1}^n \E\left[ \log \inf e_i(C_i(X_i)) \right] - \frac{1}{n} \sum_{i=1}^n \E\left[ \frac{ \eta_i (b_i - a_i) \Err(C_i) + |1 - \eta_i| \left|\inf e_i(C_i(X_i)) - 1\right| }{\rescale_{\eta_i}( a_i - (b_i - a_i) \Err(C_i) )} \right]
    \end{align*}
    Now, note:
    \begin{align*}
        \E\left[ \log \inf e_i(C(X_i)) \right]
        &= \E\left[ \inf \log e_i(C(X_i)) \right]
        \\ &= \E\left[ \sup \log e_i(C(X_i)) \right] - (\E\left[ \sup \log e_i(C(X_i)) \right] - \E\left[ \inf \log e_i(C(X_i)) \right])
        \\ &= \E\left[ \sup \log e_i(C(X_i)) \right] - \E\left[ \leb \hull(\log e_i(C(X_i))) \right],
    \end{align*}
    and, by Lemma~\ref{thm:main-argument} (Lemma 2.1 in the main text),
    \[ \E\left[ \sup \log e_i(C(X_i)) \right] \geq \E\left[ \log e_i(Y) \right] - \E[\log b_i - \log a_i]\,\Err(C_i). \]
    Therefore, putting it all together, we get
    \begin{align*}
        &\E\left[ \frac{1}{n} \log E^{\mathrm{ppi}-(C)}_n \right]
        \\ &\geq \frac{1}{n} \sum_{i=1}^n \E\left[ \log \inf e_i(C_i(X_i)) \right] - \frac{1}{n} \sum_{i=1}^n \E\left[ \frac{ \eta_i (b_i - a_i) \Err(C_i) + |1 - \eta_i| \left|\inf e_i(C_i(X_i)) - 1\right| }{\rescale_{\eta_i}( a_i - (b_i - a_i) \Err(C_i) )} \right]
        \\ &\geq \frac{1}{n} \sum_{i=1}^n \left( \E\left[ \log e_i(Y) \right] - \E[\log b_i - \log a_i]\,\Err(C_i) - \E\left[ \leb \hull(\log e_i(C_i(X_i))) \right] \right)
        \\ &\qquad - \frac{1}{n} \sum_{i=1}^n \E\left[ \frac{ \eta_i (b_i - a_i) \Err(C_i) + |1 - \eta_i| \left|\inf e_i(C_i(X_i)) - 1\right| }{\rescale_{\eta_i}( a_i - (b_i - a_i) \Err(C_i) )} \right]
        \\ &= \frac{1}{n} \sum_{i=1}^n \E\left[ \log e_i(Y) \right] - \frac{1}{n} \sum_{i=1}^n \E[\log b_i - \log a_i]\,\Err(C_i) - \frac{1}{n} \sum_{i=1}^n \E\left[ \leb \hull(\log e_i(C_i(X_i))) \right]
        \\ &\qquad - \frac{1}{n} \sum_{i=1}^n \E\left[ \frac{ \eta_i (b_i - a_i) \Err(C_i) + |1 - \eta_i| \left|\inf e_i(C_i(X_i)) - 1\right| }{\rescale_{\eta_i}( a_i - (b_i - a_i) \Err(C_i) )} \right]
        \\ &= \E\left[ \frac{1}{n} \log E_n \right] - \frac{1}{n} \sum_{i=1}^n \E[\log b_i - \log a_i]\,\Err(C_i) - \frac{1}{n} \sum_{i=1}^n \E\left[ \leb \hull(\log e_i(C_i(X_i))) \right]
        \\ &\qquad - \frac{1}{n} \sum_{i=1}^n \E\left[ \frac{ \eta_i (b_i - a_i) \Err(C_i) + |1 - \eta_i| \left|\inf e_i(C_i(X_i)) - 1\right| }{\rescale_{\eta_i}( a_i - (b_i - a_i) \Err(C_i) )} \right]
    \end{align*}
    \begin{align*}
        &= \E\left[ \frac{1}{n} \log E_n \right] - \frac{1}{n} \sum_{i=1}^n \E\Biggl[ (\log b_i - \log a_i)\,\Err(C_i) + \leb \hull(\log e_i(C_i(X_i)))
        \\ &\qquad\qquad\qquad\qquad\qquad\quad + \frac{ \eta_i (b_i - a_i) \Err(C_i) + |1 - \eta_i| \left|\inf e_i(C_i(X_i)) - 1\right| }{\rescale_{\eta_i}( a_i - (b_i - a_i) \Err(C_i) )} \Biggr]
        \\ &= \E\left[ \frac{1}{n} \log E_n \right] - \frac{1}{n} \sum_{i=1}^n \E[\leb \hull(\log e_i(C_i(X_i)))]
        \\ &\qquad - \frac{1}{n} \sum_{i=1}^n \E\Biggl[ (\log b_i - \log a_i)\,\Err(C_i) + \frac{ \eta_i (b_i - a_i) \Err(C_i) + |1 - \eta_i| \left|\inf e_i(C_i(X_i)) - 1\right| }{\rescale_{\eta_i}( a_i - (b_i - a_i) \Err(C_i) )} \Biggr]
        \\ &= \E\left[ \frac{1}{n} \log E_n \right] - \frac{1}{n} \sum_{i=1}^n \E[\leb \hull(\log e_i(C_i(X_i)))]
        \\ &\qquad - \frac{1}{n} \sum_{i=1}^n \E\Biggl[ \Err(C_i) \log \frac{b_i}{a_i} + \frac{ \eta_i (b_i - a_i) \Err(C_i) + |1 - \eta_i| \left|\inf e_i(C_i(X_i)) - 1\right| }{\rescale_{\eta_i}( a_i - (b_i - a_i) \Err(C_i) )} \Biggr].
    \end{align*}
    Now, let $r = \max \{ ( \rescale_{\eta_i}( a_i - (b_i - a_i) \Err(C_i) ) )^{-1}, 1 \}$. Then:
    \begin{align*}
        \\ & \E\left[ \frac{1}{n} \log E_n \right] - \frac{1}{n} \sum_{i=1}^n \E[\leb \hull(\log e_i(C_i(X_i)))]
        \\ &\qquad - \frac{1}{n} \sum_{i=1}^n \E\Biggl[ \Err(C_i) \log \frac{b_i}{a_i} + \frac{ \eta_i (b_i - a_i) \Err(C_i) + |1 - \eta_i| \left|\inf e_i(C_i(X_i)) - 1\right| }{\rescale_{\eta_i}( a_i - (b_i - a_i) \Err(C_i) )} \Biggr]
        \\ &= \E\left[ \frac{1}{n} \log E_n \right] - \frac{1}{n} \sum_{i=1}^n \E[\leb \hull(\log e_i(C_i(X_i)))]
        \\ &\qquad - \frac{1}{n} \sum_{i=1}^n \E\Biggl[ \Err(C_i) \log \frac{b_i}{a_i} + r \left( \eta_i (b_i - a_i) \Err(C_i) + |1 - \eta_i| \left|\inf e_i(C_i(X_i)) - 1\right| \right) \Biggr]
        \\ &\geq \E\left[ \frac{1}{n} \log E_n \right] - \frac{1}{n} \sum_{i=1}^n \E[\leb \hull(\log e_i(C_i(X_i)))]
        \\ &\qquad - \frac{1}{n} \sum_{i=1}^n \E\Biggl[ r \Err(C_i) \log \frac{b_i}{a_i} + r \left( \eta_i (b_i - a_i) \Err(C_i) + |1 - \eta_i| \left|\inf e_i(C_i(X_i)) - 1\right| \right) \Biggr]
        \\ &= \E\left[ \frac{1}{n} \log E_n \right] - \frac{1}{n} \sum_{i=1}^n \E[\leb \hull(\log e_i(C_i(X_i)))]
        \\ &\qquad - \frac{r}{n} \sum_{i=1}^n \E\Biggl[ \Err(C_i) \log \frac{b_i}{a_i} + \eta_i (b_i - a_i) \Err(C_i) + |1 - \eta_i| \left|\inf e_i(C_i(X_i)) - 1\right| \Biggr]
        \\ &= \E\left[ \frac{1}{n} \log E_n \right] - \frac{1}{n} \sum_{i=1}^n \E[\leb \hull(\log e_i(C_i(X_i)))]
        \\ &\qquad - \frac{r}{n} \sum_{i=1}^n \E\Biggl[ \left( \log \frac{b_i}{a_i} + \eta_i (b_i - a_i) \right) \Err(C_i) + |1 - \eta_i| \left|\inf e_i(C_i(X_i)) - 1\right| \Biggr]
        \\ &= \E\left[ \frac{1}{n} \log E_n \right] - \frac{1}{n} \sum_{i=1}^n \E[\leb \hull(\log e_i(C_i(X_i)))]
        \\ &\qquad - \frac{r}{n} \sum_{i=1}^n \E\left[ h_i(\eta_i) \Err(C_i) + |1 - \eta_i| \left|\inf e_i(C_i(X_i)) - 1\right| \right]
    \end{align*}
    \begin{align*}
        &= \E\left[ \frac{1}{n} \log E_n \right] - \frac{1}{n} \sum_{i=1}^n \E[\leb \hull(\log e_i(C_i(X_i)))]
        \\ &\qquad - \frac{r}{n} \sum_{i=1}^n \E\left[ h_i(\eta_i) \Err(C_i) \right] - \frac{r}{n} \sum_{i=1}^n \E\bigl[ |1 - \eta_i| \left|\inf e_i(C_i(X_i)) - 1\right| \bigr].
        \qedhere
    \end{align*}
\end{proof}

\section{Additional results}

\subsection{Algorithms atop e-values} \label{suppl:algorithms-atop-evalues}

Beyond simple hypothesis testing, e-values can also be used as components of larger inference procedures. Notable examples include e-value-based confidence intervals/sequences, multiple testing procedures, as well as more involved examples such as change-point detection \citep{evalue-changepoint-1,evalue-changepoint-2}, test-time adaptation \citep{test-time-adaptation-yaniv} and more.
Generally speaking, by simply replacing the e-values in these predictions with our conformal prediction-powered e-values we obtain prediction-powered versions of our procedures, while retaining validity.

Formally, we a family of e-values $(E^{(\gamma)})_{\gamma \in \Gamma}$ indexed over $\Gamma$, and have an algorithm $\algo( (E^{(\gamma)})_{\gamma \in \Gamma} )$ that operates atop this family.
This algorithm comes endowed with some notion of validity, which should depend crucially on the validity of the underlying e-values:

\begin{assumption}
    If for all $\gamma \in \Gamma$, $E^{(\gamma)}$ is a valid e-value, then the algorithm $\algo( (E^{(\gamma)})_{\gamma \in \Gamma} )$ is \emph{valid}.
\end{assumption}

It then easily follows that, as long as the boundedness assumptions for the conformal prediction-powered e-values are satisfied, simply replacing the e-values with their conformal prediction-powered counterparts retains validity, while generally enhancing power:

\begin{proposition}
    Suppose that for all $\gamma \in \Gamma$, $(E^{(\gamma)}_0, E^{(\gamma)}_1, \ldots)$ forms a test supermartingale.
    Then $\algo((E^{\ppi-(\gamma)})_{\gamma \in \Gamma})$ is \emph{valid}.
\end{proposition}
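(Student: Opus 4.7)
The plan is to reduce the validity of $\algo$ on the conformal prediction-powered family to its validity on an ordinary family of valid e-values, which is then supplied directly by the assumption on $\algo$. The only thing that needs to be verified is that each $E^{\ppi-(\gamma)}$ is itself a valid e-value (or a valid e-process when $\algo$ evaluates it at a stopping time).

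First, I would apply Proposition~\ref{thm:evalue-valid-supermartingale} pointwise in $\gamma$: for each fixed $\gamma \in \Gamma$, the hypothesis gives that $(E^{(\gamma)}_0, E^{(\gamma)}_1, \ldots)$ is a test supermartingale under the null, and the boundedness/predictability conditions required to form the conformal prediction-powered version are assumed in the construction of $E^{\ppi-(\gamma)}$. Hence by Proposition~\ref{thm:evalue-valid-supermartingale}, $(E^{\ppi-(\gamma)}_0, E^{\ppi-(\gamma)}_1, \ldots)$ is also a test supermartingale, and $E^{\ppi-(\gamma)}_\tau$ is a valid e-value for every stopping time $\tau$ (in particular for whatever data-dependent index $\algo$ chooses to read out).

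Second, having established that every member of the family $(E^{\ppi-(\gamma)})_{\gamma \in \Gamma}$ is a valid e-value, I would invoke the algorithmic assumption verbatim with $E^{(\gamma)}$ replaced by $E^{\ppi-(\gamma)}$. This immediately yields that $\algo((E^{\ppi-(\gamma)})_{\gamma \in \Gamma})$ is valid, which is the desired conclusion.

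There is no real obstacle here: the argument is a direct composition of Proposition~\ref{thm:evalue-valid-supermartingale} with the algorithmic assumption. The only point requiring care is making explicit that the boundedness and predictability conditions needed to build $E^{\ppi-(\gamma)}$ hold uniformly in $\gamma$, which is implicit in treating the family $(E^{\ppi-(\gamma)})_{\gamma \in \Gamma}$ as well-defined; no additional assumption beyond what is already stated is required.
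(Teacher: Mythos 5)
Your proof is correct and takes essentially the same route as the paper: apply Proposition~\ref{thm:evalue-valid-supermartingale} pointwise in $\gamma$ to conclude each $E^{\ppi-(\gamma)}$ is a test supermartingale (hence a valid e-value), then invoke the algorithmic validity assumption. The only difference is that you spell out the dependence on the boundedness/predictability hypotheses and the stopping-time remark, which the paper leaves implicit.
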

\begin{proof}
    By Proposition~\ref{thm:evalue-valid-supermartingale} (Proposition 2.8 in the main text), for every $\gamma \in \Gamma$, $E^{\ppi-(\gamma)}$ is a test supermartingale. Thus they are all valid e-values, making the procedure atop the conformal e-values valid.
\end{proof}

We can also quantify the power of the procedure, but this generally requires us to consider the specifics of the algorithm over the e-values.

A special case worth highlighting is that of confidence sequences. We want to infer a parameter $\theta^\star \in \Theta$, and have a family of e-values $(E^{(\theta)}_n)_{\theta \in \Theta}$. We then produce a confidence set via the following algorithm, for some significance level $\alpha$:
\begin{equation}
    \algo((E^{(\theta)}_n)_{\theta \in \Theta}) := \left\{ \theta \in \Theta : E^{(\theta)}_n \leq 1/\alpha \right\}.
\end{equation}
It then follows:

\begin{proposition}
    $\algo((E^{\ppi-(\theta)}_n)_{\theta \in \Theta})$ is an anytime-valid confidence sequence for $\theta^\star$. I.e.,
    \[ \P[ \forall t,\ \theta^\star \in \algo((E^{\ppi-(\theta)}_n)_{\theta \in \Theta}) ] \geq 1 - \alpha. \]
\end{proposition}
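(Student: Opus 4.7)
The plan is to reduce the anytime-validity claim to a direct application of Ville's inequality to the prediction-powered e-value at the true parameter $\theta^\star$. First, I would invoke Proposition~\ref{thm:evalue-valid-supermartingale}: the hypothesis of the current proposition says that $(E^{(\theta)}_n)_n$ is a test supermartingale for every $\theta \in \Theta$, and in particular for $\theta = \theta^\star$ (the relevant null for the inference of $\theta^\star$). By Proposition~\ref{thm:evalue-valid-supermartingale}, the corresponding sequence $(E^{\ppi-(\theta^\star)}_n)_n$ is therefore also a test supermartingale -- i.e., a nonnegative supermartingale with $\E[E^{\ppi-(\theta^\star)}_0] \leq 1$.

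Next, I would unfold the complement of the claimed event using the definition of $\algo$. Since $\theta^\star \notin \algo((E^{\ppi-(\theta)}_t)_{\theta \in \Theta})$ is equivalent to $E^{\ppi-(\theta^\star)}_t > 1/\alpha$, we have
\[ \{\exists t : \theta^\star \notin \algo((E^{\ppi-(\theta)}_t)_{\theta \in \Theta})\} = \left\{\sup_{t} E^{\ppi-(\theta^\star)}_t > 1/\alpha\right\}. \]
Applying Ville's inequality to the nonnegative supermartingale $(E^{\ppi-(\theta^\star)}_t)_t$ then gives
\[ \P\left[\sup_{t} E^{\ppi-(\theta^\star)}_t > 1/\alpha\right] \leq \alpha \cdot \E[E^{\ppi-(\theta^\star)}_0] \leq \alpha, \]
and taking complements yields $\P[\forall t,\ \theta^\star \in \algo((E^{\ppi-(\theta)}_t)_{\theta \in \Theta})] \geq 1 - \alpha$, as desired.

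There is essentially no nontrivial obstacle here beyond carefully tracking which parameter the supermartingale structure is tied to: the test supermartingale property we exploit is the one indexed by $\theta^\star$, and this is exactly what matters because the universal quantifier in the statement is over time $t$, not over $\theta$. In particular, the argument makes no requirement on the behaviour of $E^{\ppi-(\theta)}_t$ for $\theta \neq \theta^\star$ -- those e-values are allowed (and indeed expected) to grow large, which is precisely the mechanism by which incorrect parameters are excluded from $\algo$ over time.
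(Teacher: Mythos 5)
Your proof is correct and follows essentially the same route as the paper: both invoke Proposition~\ref{thm:evalue-valid-supermartingale} to get the test-supermartingale property of $E^{\ppi-(\theta^\star)}$, unfold the complement of the coverage event via the definition of $\algo$, and conclude by Ville's inequality. Your version is a bit more explicit about which null (namely $\theta = \theta^\star$) the supermartingale structure is tied to, which is a worthwhile clarification but not a different argument.
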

\begin{proof}
    Because each $E^{(\theta)}_n$ is valid, we get that each $E^{\ppi-(\theta)}_n$ is also valid. Then, using Ville's inequality:
    \begin{align*}
        \P[ \forall t,\ \theta^\star \in \algo((E^{\ppi-(\theta)}_n)_{\theta \in \Theta}) ]
        &= 1 - \P[ \exists t\ \text{such that}\ \theta^\star \not\in \algo((E^{\ppi-(\theta)}_n)_{\theta \in \Theta}) ]
        \\ &= 1 - \P[ \exists t\ \text{such that}\ E^{\ppi-(\theta^\star)}_n > 1/\alpha ]
        \\ &\geq 1 - \alpha.
        \qedhere
    \end{align*}
\end{proof}

\subsection{Estimation in Higher Dimensions} \label{suppl:multivariate}

We state here a multi-dimensional version of Lemma~\ref{thm:main-argument} (Lemma 2.1 in the main text).
The remaining results follow analogously, as long as one uses multivariate confidence intervals where necessary.

Here, we take $\phi : \mathcal{Y} \to \R^d$, and let $\{e_1, \ldots, e_d\}$ be an orthonormal basis for $\R^d$ (e.g. the cannonical basis).
Then:

\begin{lemma}
    Let $C : \mathcal{X} \to 2^{\mathcal{Y}}$ be a set predictor and suppose that $\langle \phi(Y), e_j \rangle \in [a_j, b_j]$ for every $j = 1, \ldots, d$ almost surely; let $M = \sum_{i=1}^d (b_j - a_j)$. Then
    \[ \E\left[ \sum_{i=1}^d e_j \inf \langle \phi(C(X)), e_j \rangle \right] - M\,\Err(C) \leq \E[\phi(Y)] \leq \E\left[ \sum_{i=1}^d e_j \sup \langle \phi(C(X)), e_j \rangle \right] + M\,\Err(C). \]
\end{lemma}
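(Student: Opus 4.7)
The plan is to reduce the $d$-dimensional statement to $d$ applications of the scalar Lemma~\ref{thm:main-argument}, one per basis direction. For each $j \in \{1, \ldots, d\}$, define $\phi_j : \mathcal{Y} \to \R$ by $\phi_j(y) := \langle \phi(y), e_j \rangle$. The hypothesis $\langle \phi(Y), e_j \rangle \in [a_j, b_j]$ almost surely is exactly the boundedness condition of the scalar lemma applied to $\phi_j$, which therefore yields
\[ \E[\inf \phi_j(C(X))] - (b_j - a_j) \Err(C) \leq \E[\phi_j(Y)] \leq \E[\sup \phi_j(C(X))] + (b_j - a_j) \Err(C). \]

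Next, since $b_j - a_j \leq \sum_i (b_i - a_i) = M$, the per-coordinate slack may be conservatively loosened to $M\,\Err(C)$ without breaking the inequalities. Using $\inf \phi_j(C(X)) = \inf \langle \phi(C(X)), e_j \rangle$ (and analogously for sup) together with $\E[\phi_j(Y)] = \langle \E[\phi(Y)], e_j \rangle$ (linearity of both the inner product and expectation), the scalar bound rewrites as
\[ \E[\inf \langle \phi(C(X)), e_j \rangle] - M\,\Err(C) \leq \langle \E[\phi(Y)], e_j \rangle \leq \E[\sup \langle \phi(C(X)), e_j \rangle] + M\,\Err(C). \]

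To assemble the vector statement, I would interpret the claimed multivariate inequality in $\R^d$ componentwise in the basis $\{e_j\}$, with the scalar offset $M\,\Err(C)$ understood as acting componentwise as well. Since $\{e_j\}$ is orthonormal, the $e_j$-coordinate of $\E[\sum_i e_i \inf \langle \phi(C(X)), e_i \rangle]$ is precisely $\E[\inf \langle \phi(C(X)), e_j \rangle]$, so the $d$ scalar inequalities just derived stack directly into the multivariate bound. There is essentially no analytic obstacle: the entire content is carried by the scalar Lemma~\ref{thm:main-argument} applied once per coordinate, and the only care needed is in fixing the componentwise interpretation of vector comparison and the scalar-minus-vector notation, together with the (mild) looseness introduced when replacing the tight per-coordinate range $b_j - a_j$ by the uniform bound $M$.
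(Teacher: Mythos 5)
Your proof is correct and mirrors the paper's argument: both expand $\E[\phi(Y)]$ coordinatewise in the orthonormal basis, apply the scalar Lemma~\ref{thm:main-argument} in each direction, and reassemble. You are in fact slightly more explicit than the paper in flagging the loosening from the per-coordinate slack $(b_j-a_j)\,\Err(C)$ to the uniform $M\,\Err(C)$ and in pinning down the componentwise reading of the vector inequality, both of which the paper leaves implicit.
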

\begin{proof}
    First, note that
    \begin{equation}\label{eq:multidim-expand}
        \E[\phi(Y)] = \E\left[ \sum_{j=1}^d e_j \langle \phi(Y), e_j \rangle \right] = \sum_{j=1}^d e_j \E\left[ \langle \phi(Y), e_j \rangle \right].
    \end{equation}
    Now, for each $j = 1, \ldots, d$, by Lemma~\ref{thm:main-argument} (Lemma 2.1 in the main text),
    \[ \E\left[ \inf \langle \phi(C(X)), e_j \rangle \right] - (b_j - a_j)\Err(C) \leq \E\left[ \langle \phi(Y), e_j \rangle \right] \leq \E\left[ \sup \langle \phi(C(X)), e_j \rangle \right] + (b_j - a_j)\Err(C); \]
    plugging this back into Equation~\ref{eq:multidim-expand}, we get the desired bounds.
\end{proof}

\section{Experiment Details} \label{suppl:experiment-details}

\subsection{Phishing URL Dataset: Mean Estimation}\label{sec:phishing}

\paragraph{Dataset and split.}
We employ the numeric subset of the \emph{Phishing URL} corpus\,\citep{dataset-phishing}, containing $N=235\,795$ labelled examples.
The target parameter is the prevalence
$\theta^\star=\mathbb{E}[Y]$ of phishing URLs.
For every seed $s\in\{0,\dots,99\}$ we create an independent\\
\textbf{train}/\textbf{calibration}/\textbf{test} split as follows:
\[
\text{train}=99.5\% \;(234\,616\;\text{samples}),\qquad
\text{calibration}=300,\qquad
\text{test}=879.
\]

The training labels are used solely to fit the predictive model; test labels are discarded.

\paragraph{Predictive model.}
An \texttt{XGBoost} classifier (default hyper-parameters, evaluation metric
\texttt{logloss}) is trained on the numerical features of the training set:
\begin{lstlisting}[language=Python,basicstyle=\ttfamily\small]
model = xgb.XGBClassifier(eval_metric="logloss")
model.fit(X_tr, Y_tr)
\end{lstlisting}

\paragraph{Conformity score.}
Let $\hat p(x)$ be the model’s predicted probability that $Y=1$.
For $(x,y)\in\mathcal{C}$ (calibration set) we use the conformity score
\[
  s(x,y)=
  \begin{cases}
    \hat p(x),       & y=0,\\[2pt]
    1-\hat p(x),     & y=1.
  \end{cases}
\]
The miscoverage tolerance is
$\text{err}=1.01/|\mathcal{C}|$.
The $(1-\text{err})$-quantile of $\{s_i\}_{i\in\mathcal{C}}\cup\{+\infty\}$
yields the threshold $t$, from which we construct the prediction set
\(
C(x)=\{0\}\;\text{if}\;\hat p(x)\le t;\;
C(x)=\{1\}\;\text{if}\;1-\hat p(x)\le t;\;
C(x)=\{0,1\}\;\text{otherwise}.
\)

\paragraph{Confidence-interval methods.}
All intervals are built at significance level $\alpha=0.01$ with a
CLT-based constructor and target range $M=1$.

For each seed we record the interval width with are reported in Figure 1(b).
The full implementation is available at
\texttt{supplementary/experiment1/mean\_estimation.py}.

\subsection{Gene Expression Dataset: Median Estimation}\label{sec:gene_median}

\paragraph{Dataset and split.}
In this experiments we focus on estimating the median of gene expression levels induced by yeast promoters sequences, we have access to labelled data and a transformer model from \citep{dataset-geneexpression}, containing $N=61\,150$ labelled examples.
For every seed $s\in\{0,\dots,99\}$ we create an independent \textbf{calibration}/\textbf{test} split as follows:
\[
\text{calibration}=10,\qquad
\text{test}=61140.
\]

\paragraph{Conformity score.}
For $(x,y)\in\mathcal{C}$ (calibration set) we use the conformity score
\[
  s(x,y)= \lvert y - f(x) \rvert,
\]
where $f(x)$ is the output of our pre-trained model.

The specified miscoverage level for conformal prediction is
$\text{err}=1.01/|\mathcal{C}|$.
The $(1-\text{err})$-quantile of $\{s_i\}_{i\in\mathcal{C}}\cup\{+\infty\}$
yields the threshold $t$, from which we construct the prediction set:
$$
C(x)=(f(x) - t, f(x) + t).
$$
\paragraph{Confidence-interval methods.}
All intervals are built at significance level $\alpha=0.01$ with a
CLT-based constructor and target range $M=1$.

The full implementation is available at
\texttt{supplementary/experiment1/quantile\_estimation.py}.

\subsection{Section 3.2 in the main text}

We use the dataset from \citep{dataset-thyroid}, which has 383 observations.
We split 60\% of these for statistical inference with our method; the remaining 40\% are split into a training set (70\%) and a testing set (30\%).
On the training set, we train an XGBoost model with default hyperparameters. On the test set, we calibrate a conformal predictor using the same conformity score we have used for classification, first with usual split conformal prediction and then with the differentially private conformal prediction method of \citep{conformal-private}. For the conformal calibrations, we use a target coverage of $2.5\%$.

The full implementation is available at
\texttt{supplementary/experiment2/diff\_priv.py}

\subsection{Section 3.3 in the main text}\label{suppl:experiment-details-evalues}

\paragraph{Data, split and models}
We use the dataset on forest cover type prediction of \citep{dataset-forestcover}. This dataset has $N = 581\,012$ samples.
We then split 60\% of the data for training and validating our model: 75\% ($261\,455$) of that goes to training a Random Forest classifier and 25\% ($87\,152$) to estimating a validation 0-1 loss.
The remaining 40\% ($232\,405$) of the data is used for our online risk monitoring (but only the first $100\,000$ of these are shown in the plot).
Also on the validation set we train a residual model to predict the probability of whether the model made a correct prediction (i.e., predict the conditional 0-1 loss).

We setup two data streams: one unmodified, and another increasingly poisoned to simulate a harmful distribution shift. For this poisoning, at each point we flip a coin with probability $((t+1)/5 + 0.1)^2 \ind[t \geq 20\%]$, where $t \in [0, 1]$ indicates how far along in the experiment we are. If this coin falls heads (which can only happen after $t \geq 20\%$), then instead of using the real data we swap for a randomly chosen sample from a problematic set. This problematic set of samples is determined by those that our residual model predicts as at least 50\% likely to be incorrect.

\paragraph{Online conformal prediction} For conformal prediction, we use the same score as in the prior classification tasks, over our residual model. For the online conformal prediction method of \citep{cp-online-sgd} we use as hyperparameters $\epsilon = 0.3$ with an initial step size of $1.0$, targeting a coverage of 0.1\%.

\paragraph{E-value \& approximately log-optimal choice of the $\eta_i$s}
Our base e-value is given by
\[ e_i((X_i,Y_i)) := 1 + \lambda_i \left( \ind[f(X_i) \neq Y_i] - (\mathrm{ValRisk} + \epsilon_{\mathrm{tol}}) \right), \]
with $\lambda_i$ a predictable sequence of bets bounded in $(0, 1/(\mathrm{ValRisk} + \epsilon_{\mathrm{tol}}))$.
When introducing the conformal prediction-powered modification, the overall e-values becomes
\[ \prod_{i=1}^n \left( 1 + \eta_i \left( \lambda_i \left( \ind[f(X_i) \neq Y_i] - (\mathrm{ValRisk} + \epsilon_{\mathrm{tol}}) \right) - (b_i - a_i) \Err(C_i) \right) \right). \]
For the sake of simplicity, we take $\lambda_i = \eta_i$ at all steps. These $\eta_i$s are derived using an analogue of the aGRAPA criterion of \citep{evalue-mean}, meaning that we solve the first order optimality condition of the growth rate using a first-order Taylor approximation for $h(t) = 1/(1 + t)$. The resulting $\eta_i$s are given by
\[ \eta_i = \frac{\widehat{\mu}_i - (\mathrm{ValRisk} + \epsilon_{\mathrm{tol}}) - (b_i - a_i)\Err(C_i)}{\widehat{\sigma}^2_i + ( \widehat{\mu}_i - (\mathrm{ValRisk} + \epsilon_{\mathrm{tol}}) - (b_i - a_i)\Err(C_i) )^2}, \]
where $\widehat{\mu}$ and $\widehat{\sigma}^2$ are estimates of the mean and variance of the conformal imputations, respectively; we do these via exponentially weighted moving averages with $\alpha=0.01$ in order to handle the non-i.i.d. structure.

The full implementation is available at
\texttt{supplementary/experiment3/evalues.py}

\end{document}